\DeclareMathOperator{\subseq}{\preceq}
\DeclareMathOperator{\polylog}{polylog}
\DeclareMathOperator{\bigO}{O}
\DeclareMathOperator{\emptyword}{\varepsilon}
\newcommand{\alphabet}[1]{\textsf{alph}(#1)}
\newcommand{\gaptuple}{gc}
\newcommand{\gap}[3]{\mathsf{gap}_{#2}(#1, #3)}
\newcommand{\subseqSet}[2]{SubSeq(#1, #2)}
\newcommand{\comSubseqSet}[3]{ComSubSeq(#1, #2, #3)}
\newcommand{\len}[1]{\lvert #1 \rvert}
\newcommand{\Problem}[1]{\textsc{#1}}
\newcommand{\set}[1]{\lbrace #1 \rbrace}
\begin{document}
\title{Longest Common Subsequence with Gap Constraints}
%
%
\author{Duncan Adamson \and 
Maria Kosche\and 
Tore Ko\ss \and 
Florin Manea \and 
Stefan Siemer 
}

\institute{Department of Computer Science, University of Göttingen, Göttingen, Germany}

%
\authorrunning{D. Adamson et al.}

\maketitle              

\begin{abstract}
We consider the longest common subsequence problem in the context of subsequences with gap constraints. In particular, following Day et al. 2022, we consider the setting when the distance (i.\,e., the gap) between two consecutive symbols of the subsequence has to be between a lower and an upper bound (which may depend on the position of those symbols in the subsequence or on the symbols bordering the gap) as well as the case where the entire subsequence is found in a bounded range (defined by a single upper bound), considered by Kosche et al. 2022. In all these cases, we present efficient algorithms for determining the length of the longest common constrained subsequence between two given strings.
\end{abstract}

\section{Introduction}
\vspace*{-0.2cm}

A {\em subsequence} of a string $w = w[1] w[2] \ldots w[n]$, where $w[i]$ is a symbol from a finite alphabet $\Sigma$ for $i\in \{1,\ldots,n\}$, is a string $v = w[i_1] w[i_2] \ldots w[i_k]$, with $k \leq n$ and $1 \leq i_1 < i_2 \leq \ldots < i_{k} \leq n$. The positions $i_1,i_2,\ldots,i_k$ on which the symbols of $v$ appear in $w$ are said to define the embedding of $v$ in $w$. 

In general, the concept of subsequences appears and plays important roles in many different areas of theoretical computer science such as: formal languages and logics (e.\,g., in connection to piecewise testable languages~\cite{simonPhD,Simon72,KarandikarKS15,CSLKarandikarS,journals/lmcs/KarandikarS19}, or in connection to subword-order and downward-closures~\cite{HalfonSZ17,KuskeZ19,Kuske20,Zetzsche16}); combinatorics on words (e.\,g., in connection to binomial equivalence, binomial complexity, or to subword histories ~\cite{RigoS15,FreydenbergerGK15,LeroyRS17a,Rigo19,Seki12,Mat04,Salomaa05}); the design and complexity of algorithms. To this end, we mention some classical algorithmic problems such as the computation of {\em longest common subsequences} or of the {\em shortest common supersequences} \cite{chvatal,Hirschberg77,HuntS77,Maier:1978,MasekP80,NakatsuKY82,DBLP:journals/tcs/Baeza-Yates91,BergrothHR00}, the testing of the Simon congruence of strings and the computation of the arch-factorisation and universality of words \cite{TCS::Hebrard1991,garelCPM,SimonWords,DBLP:conf/wia/Tronicek02,CrochemoreMT03,dlt2020,DayFKKMS21,KufMFCS,GawrychowskiEtAl2021,KoscheKMS21}; see also \cite{SurveyNCMA} for a survey of some combinatorial algorithmic problems related to subsequence matching. Moreover, these problems and some other closely related ones have recently regained interest in the context of fine-grained complexity (see~\cite{DBLP:conf/fsttcs/BringmannC18,BringmannK18,AbboudEtAl2015,AbboudEtAl2014}). Nevertheless, subsequences appear also in more applied settings: for modelling concurrency~\cite{Riddle1979a,Shaw1978,BussSoltys2014}, in database theory (especially \emph{event stream processing}~\cite{ArtikisEtAl2017,GiatrakosEtAl2020,ZhangEtAl2014}), in data mining~\cite{LiW08,LiYWL12}, or in problems related to bioinformatics~\cite{BilleEtAl2012}. 

Most problems related to subsequences are usually considered in the setting where the embedding of subsequences in words are arbitrary. However, in \cite{DayKMS22}, a novel setting is considered, based on the intuition that, in practical scenarios, some properties with respect to the \emph{gaps} that are induced by the embeddings can be inferred. As such, \cite{DayKMS22} introduces the notion of \emph{subsequences with gap constraints}: these are strings $v$ which can be embedded by some mapping $e$ in a word $w$ in such a way that the \emph{gaps} of the embedding, i.\,e., the factors between the images of the mapping, satisfy certain properties. The main motivation of introducing and studying this model of subsequences in \cite{DayKMS22} comes from data-base theory \cite{KleestMeissnerEtAl2021,Kleest-Meissner23}, and the properties which have to be satisfied by the gaps are specified either in the form of length constraints (i.\,e., bounds on the length of the gap) or regular-language constraints. We refer the reader to \cite{DayKMS22} for a detailed presentation of subsequences with gap constraints and their motivations, as well as a discussion of the various related models. The main results of \cite{DayKMS22} are related to the complexity of the matching problem: given two strings $w$ and $v$, decide if there is an embedding $e$ of $v$ as a subsequence of $w$, such that the gaps induced by $e$ fulfil some given length and regular gap constraints. A series of other complexity results related to analysis problems for the set of subsequences of a word, which fulfil a given set of gap constraints, were obtained. The results of \cite{DayKMS22} are further extended in \cite{KoscheKMP22}, where the authors consider \emph{subsequences in bounded ranges}: these are strings $v$ which can be embedded by some mapping $e$ in a word $w$ in such a way that the range in which all the symbols of $v$ are embedded has length at most $B$, for some given integer $B$. This investigation was motivated in the context of sliding window algorithms \cite{GanardiHKLM18,GanardiHL16,GanardiHL18,GanardiHL18b,GanardiHLS19}, and the obtained results are again related to the complexity of matching and analysis problems.

One of the most studied algorithmic problem for subsequences is the problem of finding the length of the longest common subsequence of two strings (for short \Problem{LCS}), see, e.\,g.,  \cite{chvatal,Hirschberg77,HuntS77,Maier:1978,MasekP80,NakatsuKY82,DBLP:journals/tcs/Baeza-Yates91} or the survey \cite{BergrothHR00}. In this problem, we are given two strings $v$ and $w$, of length $m$ and $n$, respectively, over an alphabet of size $\sigma$, and want to find the largest $k$ for which there exists a string of length $k$ which can be embedded as a subsequence in both $v$ and $w$. The results on \Problem{LCS} are efficient algorithms (in most of the papers cited above) but also conditional complexity lower bounds~\cite{AbboudEtAl2014,AbboudEtAl2015,AbboudRubinstein2018}. In particular, there is a folklore algorithm solving \Problem{LCS} in $O(N)$ time, for $N=mn$, and, interestingly, the existence of an algorithm whose complexity is $O(N^{1-c})$, for some $c>0$, would refute the Strong Exponential Time Hypothesis (SETH), see~\cite{AbboudEtAl2015}. 

\paragraph{Our Contributions.} In this paper, we investigate the \Problem{LCS} problem in the context of subsequences with gap-length constraints, which seem to have a strong motivation and many application (see \cite{DayKMS22,KoscheKMP22} and the references therein). Clearly, in the model considered by \cite{DayKMS22}, the gap constraints depend on the length of the subsequence, while in \Problem{LCS} this is not known, and we actually need to compute this length. So, the model of \cite{DayKMS22} needs to be adapted to the setting of {\Problem{LCS}}. One way to do this is to consider that all the gaps of the common subsequence we search for are restricted by the same pair of lower and upper bounds; in this case, the length of the common subsequence plays no role anymore. We extend this initial idea significantly. On the one hand, we consider the case when there is a constant number of different length constraints which restrict the gaps (and they are given alongside the words). A further extension is the case when we are given, alongside the input words, an arbitrarily long tuple of gap-length constraints: the $i^{th}$ gap constraint in this tuple refers to the gap between the $i^{th}$ and $(i+1)^{th}$ symbol of the common subsequence we try to find (and it plays some role only if that subsequence has length at least $i+1$); clearly, the longest common subsequence can be as long as the input words, so it has at most length $\min\{m,n\}$. We also consider the case when the gap constraints are given by the actual letters bounding the gap. All these extensions of \Problem{LCS} refer to models of constrained subsequences, where the constraints are local, i.\,e., they depend on the embedding of the symbols bounding the gap. Finally, we also consider \Problem{LCS} in the case of subsequences in bounded ranges, where the upper bound $B$ on the size of the ranges in which we look for subsequences is given as input; in this case, we have a global constraint on the embedding of the subsequence. 

After defining these variants of the {\Problem{LCS}}, which seem interesting and well motivated to us, we propose efficient algorithms for each of them. In most cases, these algorithms are non-trivial extensions of the standard dynamic programming algorithm solving {\Problem{LCS}}. A quick overview of our results for variants of \Problem{LCS}, when the gaps are local: if all gap constraints are identical or we have a constant number of different gap constraints (and the sequence of gap constraints fulfils some additional synchronization condition) we obtain algorithms running in $O(N)$ time; if we have arbitrarily many different constraints, we obtain an algorithm running in $O(N k)$, where $k $ is the length of the longest common constrained subsequence of the input words; if, moreover, the sequence of constraints is increasing, then the problem can be solved in $O(N \polylog N)$; if the constraints on the gaps are defined according to both letters bounding them, we obtain an algorithm running in $O(\min\{N \sigma \log N, N \sigma^2\})$; if the constraints on the gaps are defined according only to the letter coming after (respectively, before) them, we obtain an algorithm running in $O(\min\{N \log N, N \sigma\})$. In the case of subsequences in bounded range, we show an algorithm which runs in $O(NB^{o(1)})$ time for the respective extension of \Problem{LCS} (i.\,e., it runs in $O(NB^{d})$, for some $0<d<1$), as well as an $\frac{1}{3}-$approximation algorithm running in $O(N)$ time.\looseness=-1

\paragraph{Related Work.} With respect to algorithms, the results of \cite{IliopoulosEtAl2007} cover the case when all gap constraints are identical. In particular, \cite{IliopoulosEtAl2007} considers a variant of \Problem{LCS} where the lengths of the gaps induced by the embeddings of the common subsequence in the two input strings are all constrained by the same lower and upper bounds and, additionally, there is an upper bound on the absolute value of the difference between the lengths of the $i^{th}$ gap induced in $w$ and the $i^{th}$ gap induced in $v$, for all $i$. The authors of that paper propose a quadratic-time algorithm for this problem and then derive more efficient algorithms in some particular cases. To the best of our knowledge, the case of multiple gap-length constraints was not addressed so far in the literature. In \cite{AbboudEtAl2015}, the authors consider \Problem{LCS} for subsequences in a bounded range, called there \Problem{Local-2-LCS}, as an intermediate step in showing complexity lower bounds for \Problem{LCS}; they only mention the trivial $O(NB^2)$ algorithm solving it. The results of \cite{Charalampopoulos21} lead to an $O(N^{1+o(1)})$ solution for this problem; our solution builds on that approach. \looseness=-1

With respect to lower bounds, \Problem{LCS} is a particular case for all the problems we consider in our paper, as we can simply take all the length constraints to be trivial: $(0,n)$ in the case of gap constraints or $B=n$ in the case of subsequences in bounded ranges. Therefore, for each of our problems, the existence of an algorithm whose worst case complexity is $O(N^{1-c})$, with $c>0$, would refute SETH. Thus, if $\sigma\in O(1)$ (i.\,e., when the input is over alphabets of constant size), most of our algorithms solving \Problem{LCS} with gap-length constraints are optimal (unless SETH is false) up to polylog-factors; the exceptions are the two cases when we do not impose any monotonicity or synchronization condition on the tuple of gap-constraints. If $\sigma$ is not constant, the previous claim also does not hold anymore for the case when the constraints on the gaps are defined according to both letters bounding them. In the case of \Problem{LCS} for subsequences in a bounded range, \cite{AbboudEtAl2015} shows quadratic lower bounds even for $B$ being polylogarithmic in $n$. Thus, unless $B\in O(\polylog N)$, there is a super-logarithmic mismatch between the upper bound provided by our algorithm and the existing lower bound. 

It is natural to ask what happens when we have non-trivial constraints, such as, e.g., constraints of the form $(a,b)$ with $a,b\in O(1)$. In \cite{DayKMS22}, it is shown that deciding whether there exists an embedding of a string $v$ as a subsequence of another string $w$, such that this embedding satisfies a sequence of $|v|$ constraints of the form $(a,b)$ with $a,b\leq 6$, cannot be done in $O(N^{1-c})$ time, with $c>0$, unless SETH is false; moreover, the respective reduction can be modified so that the embedding fulfils our synchronization property for the case of $O(1)$ distinct gap constraints. The respective decision problem can also be solved by checking whether the longest common constrained subsequence of $v$ and $w$, where the gap-length constraints for the common subsequence are exactly those defined for the embedding of $v$ in $w$, has length $|v|$. So, the same lower bound from \cite{DayKMS22} (which coincides with the lower bound for the classical \Problem{LCS} problem) holds for the constrained \Problem{LCS} problem, even when we have a constant number of constant gap-length constraints, fulfilling, on top, the aforementioned synchronization property also. 
Due to page limitations some proofs are omitted in this version; see the full version \cite{AKKMS23}. \looseness=-1

\vspace*{-0.1cm}
\section{Preliminaries}\label{sec:prel}
\vspace*{-0.1cm}

Let $\mathbb{N} = \{1, 2, \ldots\}$ be the set of natural numbers, $[n] = \{1, \ldots, n\}$, and $[m:n]=[n]\setminus [m-1]$, for $m,n \in \mathbb{N}$. For $(a,b),(c,d)\in \mathbb{N}^2$, we write $(a,b) \subseteq (c,d)$ if and only if $c\leq a$ and $b\leq d$. All logarithms used in this paper are in base $2$. 

For a finite alphabet $\Sigma$, $\Sigma^+$ denotes the set of non-empty words (or strings) over $\Sigma$ and $\Sigma^* = \Sigma^+ \cup \{\emptyword\}$ (where $\emptyword$ is the empty word). For a word $w \in \Sigma^*$, $|w|$ denotes its length (in particular, $|\emptyword| = 0$).
We set $w^0 = \emptyword$ and $w^k = w w^{k-1}$ for every $k \geq 1$. For a word $w$ of length $n$ and some $i\in [n]$, we denote by $w[i]$ the letter on the $i^{th}$ position of $w$, so $w = w[1] w[2] \cdots w[n]$. For every $i, j \in [|w|]$, we define $w[i:j] = w[i] w[{i+1}] \ldots w[j]$ if $i\leq j$, and $w[i:j]=\emptyword$, if $i>j$. For $w \in \Sigma^*$, we define $\alphabet{w} = \{b \in \Sigma \mid b$ occurs at least once in $w\}$. The strings $w[i:j]$ are called \emph{factors} of the string $w$; if $i=1$ (respectively, $j=|w|$), then $w[i:j]$ is called a \emph{prefix} (respectively, {\em suffix}) of $w$. For simplicity, for a word $w$ and two natural numbers $m\leq n$, we write $w\in [m,n]$ if $m\leq |w|\leq n$.

For an $m\times n$ matrix $M=(M[i,j])_{i\in [m],j\in [n]}$ and two sets $I\subset [m], J\subset [n]$, let $M[I,J]$ be the submatrix $(M[i,j])_{i\in I,j\in J}$ consisting in the elements which are at the intersection of row $M[i,\cdot]$ and column $M[\cdot,j]$ for $i\in I$ and $j\in J$. \looseness=-1

Further, we define the notions of subsequence and subsequence with gap-length constraints, following \cite{DayKMS22}. Our definitions are based on the notion of {\em embedding}. For a string $w$, of length $n$, and a natural number $k\in [n]$, an embedding is a function $e \colon [k] \to [n]$ such that $i < j$ implies $e(i) < e(j)$ for all $i, j \in [k]$. We say $e$ is a \emph{matching} embedding if $e(k)=n$. For strings $u,w\in\Sigma^\ast$ with $\len u\leq\len w$, an embedding $e \colon [\len u] \to [\len w]$ is an \emph{embedding of $u$ into $w$} if $u=w[e(1)]w[e(2)]\ldots w[e(k)]$, then $u$ is called a \emph{subsequence of $w$}. 

For an embedding $e \colon [k] \to [|w|]$ and every $j \in [k-1]$, the \emph{$j^{\text{th}}$ gap of $w$ induced by $e$} is the string $\gap{w}{e}{j} = w[e(j)+1:e(j+1)-1]$. A \emph{$t$-tuple of gap-length constraints} is a $t$-tuple $\gaptuple = (C_1, C_2, \ldots, C_{t})$ with $C_i =(\ell_i,u_i)$ and $0\leq \ell_i\leq u_i\leq n$ for every $i \in [t]$. We set $\gaptuple[i] = C_i$ for every $i \in [t]$, and $\gaptuple[1:i]= (C_1, C_2, \ldots, C_{i})$. We say that an embedding $e$ \emph{satisfies a $(k-1)$-tuple of gap-length constraints $\gaptuple$ with respect to a string $w$} if it has the form $e \colon [k] \to [|w|]$, and, for every $i \in [k - 1]$, $\ell_i \leq |\gap{w}{e}{i}| \leq u_i$ (that is, $\gap{w}{e}{i} \in C_i$). 

If there is an embedding $e$ of $u$ into $w$ satisfying the gap constraints $\gaptuple$, we denote this by $u\subseq_{\gaptuple} w$. For a $(k-1)$-tuple $\gaptuple$ of gap constraints, let $\subseqSet{\gaptuple}{w}$ be the set of all subsequences of $w$ induced by embeddings satisfying $\gaptuple$, i.\,e., $\subseqSet{\gaptuple}{w} = \{u \mid u\subseq_{\gaptuple} w \}$. The elements of $\subseqSet{\gaptuple}{w}$ are also called the \emph{$\gaptuple$-subsequences of $w$}. For more details see \cite{DayKMS22}.

We are interested in defining and investigating the longest common subsequence problem (\Problem{LCS} for short) in the context of subsequences with gap constraints. However, in the framework introduced in \cite{DayKMS22}, the gap constraints depend on (the length of) the subsequence, and this is not known for the \Problem{LCS} problem. As such, we propose a series of problems where we introduce variants of \Problem{LCS} accommodating gap-length constraints. In all our problems, we have two input strings $v$ and $w$, with $|v|=m$ and $|w|=n$ and $m\leq n$, and these strings are over an alphabet $\Sigma=\{1,2,\ldots,\sigma\}$, with $\sigma \leq m$. For the rest of this paper, let $N=mn$. We also consider w.l.o.g. that, when the input contains gap-length constraints, every individual constraint $C=(\ell,u)$ fulfils $0\leq \ell\leq u\leq n$. 

First, some additional definitions. Let $v,w\in\Sigma^\ast$ be two words; a word $s$ is a common subsequence of $v$ and $w$ if $s$ is a subsequence of both $v$ and $w$. Let $\gaptuple$ be a $(k-1)$-tuple of gap-length constraints. A word $s$ of length $k$ is a \emph{common $\gaptuple$-subsequence of $v$ and $w$} if both $s\subseq_{\gaptuple} w$ and $s\subseq_{\gaptuple} v$ hold. 
Let $\comSubseqSet{\gaptuple}{v}{w}$ denote $\subseqSet{\gaptuple}{v} \cap \subseqSet{\gaptuple}{w}$. 

A $(k-1)$-tuple $\gaptuple$ of gap-length constraints is called {\em increasing} if $\gaptuple[i]\subseteq \gaptuple[i+1]$, for all $i\in [k-2]$. Let $\gaptuple$ be an increasing $(k-1)$-tuple of gap-length constraints and let $i\in [k-2]$. Assume $s$ is a $\gaptuple[1:i]$-subsequence embedded in $w[1:i']$, such that the last position of $s$ is mapped to $i'$, and $t$ is a $\gaptuple[1:j]$-subsequence embedded in $w[1:i']$ as well, such that the last position of $t$ is mapped to $i'$, and $j>i$. If  there exists $a\in \Sigma$ such that the embedding of $s$ in $w[1:i']$ can be extended to an embedding of $sa$ in $w[1:i'']$, which satisfies $\gaptuple[1:i+1]$, for some $i''>i'$, then the embedding of $t$ in $w[1:i']$ can be extended as well to an embedding of $ta$ in $w[1:i'']$ which satisfies $\gaptuple[1:i+1]$. \looseness=-1

A $(k-1)$-tuple $\gaptuple$ of gap-length constraints is called {\em synchronized} when it satisfies the property that for all $i,j\in [k-1]$, if $\gaptuple[i]=\gaptuple[j]$ and $i\leq j$ then $\gaptuple[i+e]\subseteq \gaptuple[j+e]$ for all $e\geq 0$ such that $i+e\leq j+e\leq m-1$; for example, the tuple $((0,5)(0,1) (0,2) (0,3) (0,1) (0,5) (0,3) (0,4))$ is synchronized. Let $\gaptuple$ be a synchronized $(k-1)$-tuple of gap-length constraints and let $i\in [k-2]$. Assume $s$ is a $\gaptuple[1:i]$-subsequence embedded in $w[1:i']$, such that the last position of $s$ is mapped to $i'$, and $t$ is a $\gaptuple[1:j]$-subsequence embedded in $w[1:i']$ as well, such that the last position of $t$ is mapped to $i'$, and $j>i$ and $\gaptuple[i+1]=\gaptuple[j+1]$. Now, if  there exists a letter $a$ such that the embedding of $s$ in $w[1:i']$ can be extended to an embedding of $sa$ in $w[1:i'']$ which satisfies $\gaptuple[1:i+1]$, for some $i''>i'$, then the embedding of $t$ in $w[1:i']$ can be extended as well to an embedding of $ta$ in $w[1:i'']$ which satisfies $\gaptuple[1:i+1]$.

The Longest Common Subsequence Problem ($\Problem{LCS}$) is defined as follows.
\begin{problem}[\Problem{LCS}]
Given $v,w$, compute the largest $k\in\mathbb [m]$ such that there exists a common subsequence $s$ of both $v$ and $w$ with $|s|=k$. 
\end{problem}

We now extend ${\Problem{LCS}}$ to the case of subsequences with gap constraints (for a more detailed discussion on variants of this problem, see the full version \cite{AKKMS23}). Firstly we consider the case when the constraints are {\em local}, as in \cite{DayKMS22}: they concern only the gaps occurring between two consecutive symbols of the subsequence.

\begin{problem}[$\Problem{LCS-MC}$]
Given $v,w\in\Sigma^\ast$ and an $(m-1)$-tuple of gap-length constraints $\gaptuple$, compute the largest $k\in\mathbb N$ such that there exists a common $\gaptuple[1:k-1]$-subsequence $s$ of $v$ and $w$, with $|s|=k$. That is, find the largest $k$ for which $\comSubseqSet{\gaptuple[1:k-1]}{v}{w}$ is non-empty.
\end{problem}
Clearly, $\Problem{LCS}$ is a particular case of $\Problem{LCS-MC}$, where $\gaptuple=((0,n),\ldots ,(0,n))$. 

In $\Problem{LCS-MC}$ the input tuple gap-length constraints contains arbitrarily many constraints (therefore the acronym MC in the name of the problem), as many as the maximum amount of gaps that a common subsequence of $v$ and $w$ may have (that is, $m-1$). We also consider \Problem{LCS-MC} for increasing tuples of gap-length constraints $\gaptuple$; this variant is called $\Problem{LCS-MC-INC}$.

We consider two special cases of Problem $\Problem{LCS-MC}$, where all these constraints are either identical (i.\,e., LCS with one constraint) or drawn from a set of constant size (i.e., LCS with $O(1)$ constraints), which seem interesting to us. 
\begin{problem}[\Problem{LCS-1C}]
Given $v,w\in\Sigma^\ast$ and an $(m-1)$-tuple of identical gap-length constraints $\gaptuple=((\ell,u),\ldots,(\ell,u))$, compute the largest $k\in\mathbb N$ such that there exists a common $\gaptuple[1:k-1]$-subsequence $s$ of $v$ and $w$, with $|s|=k$. \looseness=-1
\end{problem}

\begin{problem}[$\Problem{LCS-O(1)C}$]
Given $v,w\in\Sigma^\ast$ and an $(m-1)$-tuple of gap-length constraints $\gaptuple=((\ell_1,u_1),\ldots,(\ell_{m-1},u_{m-1}))$, where $|\{(\ell_i,u_i)\mid i\in [m-1]\}|$ (the number of distinct constraints of $\gaptuple$) is in $O(1)$, compute the largest $k\in\mathbb N$ such that there exists a common $\gaptuple[1:k-1]$-subsequence $s$ of  $v$ and $w$, with $|s|=k$. \looseness=-1
\end{problem}
The general results obtained for \Problem{LCS-MC} are improved for  \Problem{LCS-O(1)C}, by considering the latter problem in the restricted setting of synchronized gap-length constraints only. The resulting problem is called \Problem{LCS-O(1)C-SYNC}.

In the problems introduced so far, the gap between two consecutive symbols in the searched subsequence depends on the positions of these symbols inside the respective subsequence (i.\,e., the gap between the $i^{th}$ and ${i+1}^{th}$ symbols is always the same). That is, the actual symbols of the subsequence play no role in defining the constraints; it is only the length of the subsequence which is important. For the next problems, the constraints on a gap between consecutive symbols are determined by one or both symbols bounding the respective gap, and do not depend on the position of the gap inside the subsequence. 

For this, we first need to modify our setting. 
Let $left:\Sigma\rightarrow [n]\times [n]$ and $right:\Sigma\rightarrow [n]\times [n]$ be two functions, defining the gap constraints. For an embedding $e \colon [k] \to [n]$, we say that $e$ \emph{satisfies the gap constraints defined by $(left,right)$ with respect to a string $x$} if for every $i \in [k - 1]$ we have that $\gap{x}{e}{i} \in left(x[e(i)]) \cap right(x[e(i+1)])$; in other words, $\gap{x}{e}{i}$ has to simultaneously fulfil the constraints $left(x[e(i)])$ and $right(x[e(i+1)])$, defined by the symbols bounding that gap. If there is an embedding $e$ of a string $y$ into $x$ satisfying the gap constraints $(left,right)$, we denote this by $y\subseq_{left,right} x$ and call $y$ a \emph{$(left,right)$-subsequence of $x$}. In the following algorithmic problems, functions $g:\Sigma \rightarrow [n]\times [n]$ are given as sequences of $\sigma$ tuples $(a,g(a))_{a\in\Sigma}$. 

\begin{problem}[\Problem{LCS-$\Sigma$}]
Given two words $v,w\in\Sigma^\ast$ and two functions $left:\Sigma\rightarrow [n]\times [n]$ and $right:\Sigma\rightarrow [n]\times [n]$, compute the largest number $k\in\mathbb N$ such that there exists a common $(left,right)$-subsequence $s$ of $v$ and $w$, with $|s|=k$. 
\end{problem}
When $left(a)=(0,n)$ for all $a\in \Sigma$ (respectively,  $right(a)=(0,n)$ for all $a\in \Sigma$), the gap constraints are defined only by the function $right$ (respectively, $left$), and the problem \Problem{LCS-$\Sigma$} is denoted \Problem{LCS-$\Sigma$R} (respectively, \Problem{LCS-$\Sigma$L}).

In the problems introduced so far, the constraints were local (in the sense that they were defined by consecutive problems in the subsequence). In the last problem we introduce, we build on the works \cite{AbboudEtAl2015,KoscheKMP22}, and consider subsequences which occur inside factors of bounded length of the input words. In particular, for a given integer $B$, a word $s$ is a $B$-subsequence of $w$ if there exists a factor $w[i+1:i+B]$ of $w$ containing $s$ as subsequence, and we look for the largest common $B$-subsequence of two input words. 
This problem was called Local-2-Longest Common Subsequence in \cite{AbboudEtAl2015}, but as the constraint acts now globally on the subsequence, we prefer to call it \Problem{LCS-BR} (LCS in bounded range), 
\begin{problem}[\Problem{LCS-BR}]
Given $v,w\in\Sigma^\ast$ and $B\in [n]$, compute the largest $k\in\mathbb N$ such that there exists a common $B$-subsequence $s$ of $v$ and $w$, with $|s|=k$. 
\end{problem}

We note that in all our definitions we are given a single tuple of gap-length constraints, meaning that the embeddings of the common subsequence of $v$ and $w$ should both fulfil the same constraints. Alternatively, we could have as input one tuple $\gaptuple_v$ of gap-length constraints for $v$ and one tuple $\gaptuple_w$ for $w$, constraining the embeddings of the common subsequence in $v$ and $w$, respectively. In this settings, the embeddings would depend both on the subsequence and on the target word, not only on the subsequence, as in the model used currently in the paper. 

Firstly, let us note that the model in which we have a single tuple of gap constraints seems more natural to us, as the gaps allowed in an embedding of a subsequence on a word seem to correspond rather to (or be determined by) properties of the subsequence, not to properties of the text in which it is embedded. For instance, in \cite{DayKMS22} as well as in the work on which that paper builds \cite{KleestMeissnerEtAl2021}, the gaps are defined for the string which one wants to embed as a subsequence in a larger string.

Secondly, most of our results hold as such for the case when we are given as input two sets of gap constraints instead of a single one. The only results that do not hold in an identical form are those which rely on 2D RMQ data structures, namely the solutions for \Problem{LCS-$\Sigma$L/R} running in $O(N \log N)$ and the solution for \Problem{LCS-$\Sigma$} running in $O(N \sigma \log N)$; in all these cases we need to extend the 2D RMQ structure to allow queries on rectangular submatrices (instead of quadratic only), and this leads to an increase in the complexities by a $\log n$-factor.\\ 

We briefly discuss the {\em computational model} we use to describe our algorithms, solving efficiently the problems described in \Cref{sec:prel}. This model is the standard unit-cost RAM with logarithmic word size: for an input of size $L$, each memory word can hold $\log L$ bits. Arithmetic and bitwise operations with numbers in $[1:L]$ are, thus, assumed to take $O(1)$ time. Moreover, the numbers we are given as inputs (describing, e.\,g., the gap constraints) are given in binary encoding. In all the problems, we assume that we are given two words $w$ and $v$, with $|w|=n$ and $|v|=m$ (so the size of the input is $L = n+m$), over an alphabet $\Sigma=\{1,2,\ldots,\sigma\}$, with $2\leq |\Sigma|=\sigma\leq m$. That is, we assume that the processed words are sequences of integers (called letters or symbols), each fitting in $O(1)$ memory words. This is a common assumption in string algorithms: the input alphabet is said to be {\em an integer alphabet}. Moreover, as the problems deal with common subsequences, we can assume without loss of generality that the alphabets of the two words are identical. For more details  on this computational model see, e.\,g.,~\cite{crochemore}.


\section{\Problem{LCS} with local gap constraints}
\label{sec:genc}

\paragraph{An initial approach for \Problem{LCS-MC}.} For all variants of \Problem{LCS} where the constraints are local (i.\,e., they depend on the position of the gap in the subsequence, or on the letters bounding it), the sets of subsequences which are candidates for $s$ can be, in the worst case, of exponential size in $N$. Therefore, computing the respective sets for both input words, their intersection, and then finding the longest string in this intersection would result in an exponential time algorithm. \Problem{LCS} can be, however, solved by a dynamic programming approach (considered folklore) in $O(N)$ time. Similarly, \Problem{LCS-MC} (and its particular cases \Problem{LCS-1C} and \Problem{LCS-O(1)C}, as well as \Problem{LCS-$\Sigma$}) can also be solved by a dynamic programming approach, running in polynomial time. We describe this general idea for \Problem{LCS-MC} only (as it can be easily adapted to all other problems). This idea reflects, to a certain extent, a less efficient implementation of the folklore algorithm for $\Problem{LCS}$. \looseness=-1

For input strings $v,w$ and constraints $\gaptuple = (C_1,\ldots, C_{m-1})$ we define, for each $p\in [m]$, a matrix $M_p\in\mathbb \{0,1\}^{m\times n}$, where $M_p[i,j]=1$ if and only if there exists a 
string $s_p$ with $\len {s_p} = p$ 
and matching embeddings $e_v,e_w$, respectively into $v[1:i]$ and $w[1:j]$, satisfying $\gaptuple[1:p-1]$.
We compute $M_1$ by setting $M_1[i,j]=1$ if and only if $v[i]=w[j]$. Then we compute $M_p$ recursively by dynamic programming: let $C_{p-1} = (\ell,u)$ and note that $M_p[i,j]= 1$ if and only if $v[i]=w[j]$ and there are positions $i'$ with $\ell\leq i-i'-1\leq u$ and $j'$ with $\ell\leq j-j'-1\leq u$ such that there is a string $s_{p-1}$ of length $p-1$ 
with matching embeddings into $v[1:i']$ and $w[1:j']$, respectively, satisfying $\gaptuple[1:p-2]$. 
That is $M_p[i,j]= 1$ if and only if there is a $1$ in the submatrix $M_{p-1}[I,J]$ with $I=[i-u-1:i-\ell-1]$ and $J=[j-u-1:j-\ell-1]$. In the end, the length $k$ of the longest common subsequence of $v$ and $w$ satisfying $\gaptuple$ equals the largest $p$ such that $M_p$ is not the $0$-matrix. A na\"ive implementation of this approach runs in $O(N^2k)$ time. 

A more efficient implementation is given in the following.

\begin{figure}
\begin{center}
\begin{tikzpicture}
\matrix [matrix of math nodes,left delimiter=(,right delimiter=),row sep=0.5mm,column sep=0.5mm] (m) {
    M[1,1] & * &  *  & * & * & * & M[1,j] & *  & M[1,m]\\
    * & * &   *  & * &  *  & * & *  & *  &  *\\
    * & * &  M[i_u,j_u]  & * & M[i_u,j_\ell] & * & * & *  & *\\
    * & * &   *  & * &  *  & * & *  & *  &  *\\
    * & * &  M[i_\ell,j_u]  & * & M[i_\ell,j_\ell] & * & * & *  & *\\
    * & * &   *  & * &  *  & * & *  & *  &  *\\
    M[1,j] & * &  *  & * & * & * & M[i,j] & *  & *\\
    * & * &   *  & * &  *  & * & *  & *  &  *\\
    M[n,1] & * &  *  & * & * & * & * & *  & M[n,m]\\};
\draw[dashed] ($0.5*(m-2-2.south east)+0.5*(m-3-3.north west)$) -- ($0.5*(m-3-5.north east)+0.5*(m-2-6.south west)$);
\draw[dashed] ($0.5*(m-6-2.north east)+0.5*(m-5-3.south west)$) -- ($0.5*(m-6-6.north west)+0.5*(m-5-5.south east)$);
\draw[dashed] ($0.5*(m-2-2.south east)+0.5*(m-3-3.north west)$) -- ($0.5*(m-6-2.north east)+0.5*(m-5-3.south west)$);
\draw[dashed] ($0.5*(m-3-5.north east)+0.5*(m-2-6.south west)$) -- ($0.5*(m-6-6.north west)+0.5*(m-5-5.south east)$);
\draw[] ($0.5*(m-6-6.south east)+0.5*(m-7-7.north west)$) -- ($0.5*(m-7-7.north east)+0.5*(m-6-8.south west)$);
\draw[] ($0.5*(m-8-6.north east)+0.5*(m-7-7.south west)$) -- ($0.5*(m-8-8.north west)+0.5*(m-7-7.south east)$);
\draw[] ($0.5*(m-6-6.south east)+0.5*(m-7-7.north west)$) -- ($0.5*(m-8-6.north east)+0.5*(m-7-7.south west)$);
\draw[] ($0.5*(m-7-7.north east)+0.5*(m-6-8.south west)$) -- ($0.5*(m-8-8.north west)+0.5*(m-7-7.south east)$);
\draw[dotted] ($0.5*(m-1-6.north east)+0.5*(m-1-7.north west)$) -- ($0.5*(m-7-7.north west)+0.5*(m-6-6.south east)$);
\draw[dotted] ($0.5*(m-7-1.north west)+0.5*(m-6-1.south west)$) -- ($0.5*(m-7-7.north west)+0.5*(m-6-6.south east)$);
\end{tikzpicture}
\end{center}
\caption{The computation of $M[i,j]$ in Lemma \ref{lem:specc} with $x_u=x-u-1$ and ${x_\ell=x-\ell-1}$ for $x\in\{i,j\}$.}
\label{fig:dynpro}
\end{figure}

\begin{lemma}\label{lem:specc}
\Problem{LCS-MC} can be solved in $O(Nk)$ time, where $k$ is the largest number for which there exists a common $\gaptuple[1:k-1]$-subsequence $s$ of $v$ and $w$.
\end{lemma}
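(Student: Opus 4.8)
The plan is to improve the naive $O(N^2 k)$ dynamic programming by observing that the bottleneck is the repeated query ``is there a $1$ in the submatrix $M_{p-1}[I,J]$?'' for rectangular windows $I=[i-u-1:i-\ell-1]$ and $J=[j-u-1:j-\ell-1]$. Computing each such query by brute force costs $O(N)$ per cell, giving $O(N^2)$ per layer and $O(N^2 k)$ overall. Instead, I would maintain, for each layer $p$, a two-dimensional data structure over the $m \times n$ grid of $M_{p-1}$ that supports rectangular emptiness (or maximum/existence) queries efficiently, together with point updates as we build $M_{p-1}$. The target is to make the per-cell work polylogarithmic or amortized constant, so that each layer costs $\tilde O(N)$ (or $O(N)$) and the total is $O(Nk)$.

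Concretely, I would reduce the rectangular emptiness query to a two-dimensional range-maximum problem: store, at each grid position $(i',j')$, the value of $M_{p-1}[i',j']$ (or the layer index at which that cell first became reachable), and answer $M_p[i,j]$ by a single $\rmqtd$ query over the window $I \times J$. The key structural simplification is that the windows are \emph{sliding}: as $i$ increases by one, the row interval $I=[i-u-1:i-\ell-1]$ shifts by one, and likewise for $J$ as $j$ increases. This monotone sliding-window structure is exactly what lets us avoid paying for an arbitrary range query at every cell. I would process the cells of $M_p$ in row-major order and maintain the relevant maxima incrementally, so that advancing the window costs only a constant number of insertions/deletions into a structure that answers the ``current window'' query in $O(1)$ amortized time, analogous to the classic sliding-window-maximum technique lifted to two dimensions (a monotone deque per column plus a monotone deque across the active columns). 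Each of the $mn$ cells of a layer then contributes $O(1)$ amortized work, so one layer costs $O(N)$ and all $k$ layers cost $O(Nk)$.

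The main obstacle will be organizing the two-dimensional sliding-window maximum so that both the horizontal shift (in $j$) and the vertical shift (in $i$) are handled with amortized constant cost simultaneously, since a one-dimensional monotone deque does not compose trivially in two dimensions when both windows slide. I would resolve this by a two-stage decomposition: first, for each column $j'$, precompute the sliding maxima of $M_{p-1}[\cdot, j']$ over the vertical window of width $u-\ell+1$ as $i$ ranges over $[m]$ — this is $O(n)$ one-dimensional sliding-window computations, each $O(m)$, total $O(N)$; this produces an auxiliary array $A[i,j']$ giving the column-restricted window maximum. Second, for each target row $i$, run a one-dimensional sliding-window maximum over $A[i,\cdot]$ with horizontal window width $u-\ell+1$ to obtain the full rectangular maximum at every $(i,j)$, again $O(N)$ total. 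Care is needed at the boundaries (when $i-u-1 \le 0$ or $j-\ell-1 \le 0$ the windows are clipped to $[1:i-\ell-1]$ etc., and when $\ell$ forces $i-\ell-1 < 1$ the window is empty, so $M_p[i,j]=0$ regardless of the match), and one must verify that $M_p[i,j]$ additionally requires $v[i]=w[j]$, which is just a conjunction applied after the range query.

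Finally I would argue termination and correctness: $M_1$ is computed in $O(N)$ by checking $v[i]=w[j]$; the recurrence for $M_p$ from $M_{p-1}$ is exactly the one stated in the excerpt, and the sliding-window machinery computes it in $O(N)$ per layer; we stop at the first $p$ for which $M_p$ is the all-zero matrix, and since a common $\gaptuple[1:k-1]$-subsequence of length $k$ exists iff $M_k$ is nonzero, the largest nonzero layer is the desired $k$. The loop runs for $k$ layers (plus one to detect the zero matrix), giving the claimed $O(Nk)$ total running time. The correctness of the sliding-maximum decomposition is the routine part; the two-dimensional amortization sketched above is the crux and is where I would spend the most care in the full proof.
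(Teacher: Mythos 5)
Your proposal is correct and essentially matches the paper's proof: both reduce each layer to a two-stage one-dimensional sliding-window computation over the rows and columns of $M_{p-1}$, giving $O(N)$ per layer and $O(Nk)$ overall. The only difference is that you maintain sliding-window \emph{maxima} via monotone deques, whereas the paper maintains sliding-window \emph{sums} (counts of $1$s) with $O(1)$ add/subtract updates --- for a $0/1$ matrix both answer the rectangle-nonemptiness query, and the paper's variant is marginally simpler.
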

\begin{proof}As mentioned above, we can compute $M_1$ in $O(N)$ time. So, let $2\leq p\leq m,$ $C_{p-1}=(\ell,u),$ and $d=\len {C_{p-1}} =u-\ell+1$. We want to compute the elements of $M_p$ and assume that $M_{p-1}$ was already computed. For convenience we treat $M_{p-1}[i,j]=0$ when either $i<1$ or $j<1$. 

We use a pair of $m\times n$ matrices $A$ and $B$, where $A[i,j]$ stores the sum of (or equivalently the amount of $1$s in) $d$ consecutive entries $M_{p-1}[i,j-d+1], \ldots,M_{p-1}[i,j]$ in the rows of $M_{p-1}$. Then 
$A[i,1]= M_{p-1}[i,1]$ and $A[i,j] = A[i,j-1] - M_{p-1}[i,j-d] + M_{p-1}[i,j]$ for all $i\in [m]$ and $j\in [2:n]$. The entry $B[i,j]$ stores the sum of all entries $M_{p-1}[i',j']$ with $0\leq i-i' < d$ and $0\leq j-j' < d$. Again, for convenience, we treat all entries $A[i,j]$ as $0$ if either $i<1$ or $j<1$. Then we compute $B[i,j]$ as follows. 
We set $B[1,1]=M_{p-1}[1,1]$, $B[1,j]=B[1,j-1] - M_{p-1}[1,j-d] + M_{p-1}[1,j]$, and $B[i,j]=B[i-1,j] - A[i-d,j] + A[i,j]$ for all $i\in [2:m]$ and $j\in[2:n]$.

Since the computation of each entry in $A$ or $B$ takes $O(1)$ time, we can compute the matrices $A$ and $B$ in time $O(N)$. Now $M_p[i,j]=1$ if and only if there is a $1$ in the submatrix $M_{p-1}[I,J]$, which is true if $B[i-\ell-1,j-\ell-1]>0$. Hence we can compute $M_p[i,j]$ in constant time, $M_p$ in $O(N)$ time, and the sequence $M_1,\ldots,M_{k+1}$ in time $O(Nk)$. 
\qed\end{proof}

\paragraph{An $O(N\log^2 N)$ time algorithm for \Problem{LCS-MC-INC}.}
We now consider \Problem{LCS-MC-INC}, a variant of \Problem{LCS-MC} where the tuple $\gaptuple$ is increasing. We begin with a lemma describing a data structure, which is then used to solve \Problem{LCS-MC-INC}.\looseness=-1
\begin{lemma}\label{lem:2DSegmentTrees}
Given an $m\times n$ matrix $M$ with all elements initially equal to $0$, we can maintain a data structure (two dimensional segment tree) ${\mathcal T}$ for $M$, so that we can execute the following operations efficiently:
\begin{itemize}
\item update$_{\mathcal T}(i',i'',j',j'',x)$: set $M[i,j]=\max\{M[i,j],x\}$, for all $i\in [i':i'']$ and $j\in [j':j'']$; here, $x$ is a natural number. Time: $O(\log n \log m)$. 
\item query$_{\mathcal T}(i,j)$: return $M[i,j]$. Time: $O( \log m)$. 
\end{itemize}
\end{lemma}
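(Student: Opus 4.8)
The plan is to build a two-dimensional segment tree as a segment tree over rows, where each node carries a secondary one-dimensional segment tree over columns. More precisely, I would first fix a one-dimensional segment tree structure over the column index set $[n]$: a balanced binary tree with $O(n)$ nodes whose leaves are the columns and whose internal nodes represent canonical column intervals. I would then build a segment tree over the row index set $[m]$; at each of its $O(m)$ nodes, covering a canonical row interval, I attach a copy of the column segment tree. To support the $\max$-assignment updates lazily, every node of every column tree stores a value $\mathrm{lazy}$ (initially $0$) representing a pending ``set to at least this value'' on the whole canonical rectangle that the pair (row node, column node) represents. The total space is $O(nm\log n / \log n)$-free accounting giving $O(nm)$ nodes across all secondary trees, i.e. $O(N)$ space, which is acceptable.

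For update$_{\mathcal T}(i',i'',j',j'',x)$, I would first decompose the row interval $[i':i'']$ into $O(\log m)$ canonical row nodes in the primary tree. At each such row node, I would apply a one-dimensional $\max$-assignment over the column range $[j':j'']$ in its attached column tree: decompose $[j':j'']$ into $O(\log n)$ canonical column nodes and set $\mathrm{lazy}\gets\max\{\mathrm{lazy},x\}$ at each. This touches $O(\log m)\cdot O(\log n)$ nodes, giving the claimed $O(\log n\log m)$ time. The key correctness point is that because the operation is a commutative, idempotent $\max$ (rather than an additive update), storing the pending value purely as a $\max$-tag at the covering nodes, with no need to push partial sums upward, is sound: the value of any cell is recovered as the maximum of all tags on the root-to-leaf path in the appropriate secondary tree, over all primary-tree ancestors.

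For query$_{\mathcal T}(i,j)$, I would walk the primary tree along the root-to-leaf path to the row $i$; this visits $O(\log m)$ row nodes, each an ancestor whose canonical row interval contains $i$. At each such row node I must read the contribution of column $j$ in its attached column tree, which is the maximum of the $\mathrm{lazy}$ tags along the root-to-leaf path to leaf $j$ in that secondary tree, an $O(\log n)$ traversal. Summing over the $O(\log m)$ primary ancestors gives $O(\log m\log n)$, which matches but is slightly worse than the stated $O(\log m)$; I would recover the tighter $O(\log m)$ bound by maintaining, at each column-tree node, instead of a bare tag, the running maximum of all tags from the secondary root down to that node (a ``pushed-down'' value kept coherent under updates). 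With that invariant, answering a query at one primary ancestor costs $O(1)$ after locating leaf $j$, and locating leaf $j$ need only be done once per query by descending the primary tree and the synchronized column leaf simultaneously, yielding the overall $O(\log m)$ stated in the lemma.

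The main obstacle is maintaining this pushed-down ``path-maximum'' invariant under the lazy rectangle updates without paying extra in the update time. Applying update$_{\mathcal T}$ writes only at the $O(\log m\log n)$ covering nodes and must not eagerly propagate the new maximum down every subtree, or the update cost blows up. The resolution is the standard segment-tree device of lazy propagation combined with idempotence: I keep the invariant only partially, pushing a node's pending maximum to its two children lazily, exactly at the moment a later update or query descends through that node, and I would argue that because $\max$ is idempotent and order-independent, deferring these pushes never loses or double-counts information. Verifying this interaction carefully — that every update correctly refreshes the path maxima it needs and every query sees the true maximum over all relevant tags — is the technical crux; the remaining details of interval decomposition and node counting are routine.
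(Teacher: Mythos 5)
Your core construction is the standard one that the lemma's own phrasing (``two dimensional segment tree'') points to, and the paper defers this proof to its full version, so I am comparing against that standard design. The outer tree over rows with a column segment tree at each node, tags written only at the $O(\log m)\cdot O(\log n)$ canonical (row node, column node) pairs, and a point value recovered as the maximum of all tags at pairs (row ancestor of $i$, column ancestor of $j$) is correct: idempotence and commutativity of $\max$ indeed make push-down unnecessary for correctness, and the update bound $O(\log n\log m)$ follows. The space bound $O(N)$ is also right, even though your accounting sentence is garbled.

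The gap is in your argument for the $O(\log m)$ query bound. You correctly observe that the naive query costs $O(\log m\log n)$, and you propose to store at each column-tree node the running maximum of tags on its root path, maintained by lazy propagation. But this does not close the gap: lazy propagation only guarantees that a node's stored value is correct \emph{after} the query has descended from the secondary root to leaf $j$, pushing pending tags as it goes; that descent costs $O(\log n)$ in \emph{each} of the $O(\log m)$ secondary trees attached to the row ancestors of $i$, so the query is still $O(\log m\log n)$. (Locating leaf $j$ in $O(1)$ via implicit array layout does not help, because the leaf's stored value is not guaranteed to be up to date without performing the pushes, and performing them eagerly at update time would touch $\Theta(n)$ nodes.) Your phrase ``descending the primary tree and the synchronized column leaf simultaneously'' conflates descents in $O(\log m)$ \emph{distinct} secondary trees, which cannot be shared. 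You flag this interaction as ``the technical crux'' but do not resolve it, and as written it does not yield $O(\log m)$. The saving grace is that the weaker $O(\log m\log n)$ query bound is all that Lemma~\ref{lem:inc} actually needs (each query there is paired with an update of the same cost), so the downstream $O(N\log^2 N)$ result is unaffected; but to prove the lemma as stated you would need a genuinely different mechanism for the query, not lazy propagation.
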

\begin{proof}[Sketch] The idea is to maintain a two dimensional (2D, for short) Segment Tree ${\mathcal T}$ for $M$ (see, for instance, \cite{deBerg2008,BergCKO08} for details about segment trees). The 2D Segment Tree ${\mathcal T}$ is defined for the matrix $M$ as follows (see, e.g., \cite{LauR21,IbtehazKR21}, and also note that this is a relatively standard data structure in competitive programming). 
\begin{itemize}
\item We define a segment tree $T$ for the range $[1:n]$. 
\item In each node $\alpha$ of the segment tree $T$ we have a segment tree $T_\alpha$ for the range $[1:m]$.
\item In each node $\beta$ of the tree $T_\alpha$ we store an integer value $val(\beta)$, which is initially $0$. 
\end{itemize}
Note that the nodes of $T$ correspond to sub-ranges $[x:y]$ of the range $[1:n]$. So, assume that we have a node $\alpha$ which corresponds to the range $[a,b]$. Then the nodes of $T_{\alpha}$ correspond canonically to the range $[a:b]$ (as they depend on $\alpha$) but also on a range $[c:d]$ of $[1:m]$ (as they are segment trees for $[1:m]$). So, the nodes of $T_{\alpha}$ correspond to submatrices $M[c:d][a:b]$ of $M$. 

Also there is a bijection between the leaves of the trees $T_{\alpha}$, where $\alpha$ is a leaf of $T$, and the elements $M[i,j]$, with $i\in [m]$ and $j\in [n]$. When constructing the structure ${\mathcal T}$ as above, we can compute and store for each $M[i,j]$ a pointer to the leaf corresponding to it. 

Now, we explain how the operations are performed (without going into details w.r.t. the standard usage of segment trees and the results regarding them). 

Consider first update$_{\mathcal T}(i',i'',j',j'',x)$. We first use the segment tree $T_1$ to identify the nodes $\alpha_1, \ldots, \alpha_e$, with $e\leq \log n$, which correspond to a partition of the interval $[j':j'']$. The process of identifying these nodes is implemented in the standard way, and requires $O(\log n)$ time. Then, for each of the trees $T_{\alpha}$, with $\alpha\in \{\alpha_1, \ldots, \alpha_e\}$, we  identify the nodes $\beta^{\alpha}_1, \ldots, \beta^{\alpha}_{e_\alpha}$, with $e_\alpha\leq \log m$, which correspond to a partition of the interval $[i':i'']$. Again, these can be identified in $O(\log m)$ time. Now, the nodes $\beta^{\alpha}_1, \ldots, \beta^{\alpha}_{e_\alpha}$, for $\alpha\in \{\alpha_1, \ldots, \alpha_e\}$, correspond to a set of $O(\log n \log m)$ submatrices which partition the submatrix $M[i':i''][j':j'']$ (whose elements we need to update). Further, for $\alpha\in \{\alpha_1, \ldots, \alpha_e\}$ and for $\beta\in \{\beta^{\alpha}_1, \ldots, \beta^{\alpha}_{e_\alpha}\}$, we set $val(\beta)=\max\{val(\beta),x\}$. 

The time complexity of this update operation is $O(\log n\log m)$. 

To retrieve the current value of $M[i,j]$, and answer query$_{\mathcal T}(i,j)$, we proceed as follows. Note first that the actual entry $M[i,j]$ might not have been changed. Therefore, we need to account for the updates we did on the trees. However, this is not complicated. We retrieve the leaf which corresponds to $M[i,j]$ (say that this is a leaf in a node $T_\alpha$). Then, we move up in the tree until we reach the root of this tree, and compute a value $ret$. Initially, $ret=M[i,j]$. Then, when we reach node $\beta$ of $T_\alpha$, we update $ret\gets \max\{ret,val(\beta)\}$. After processing the root of $T_\alpha$ we stop, and return $ret$ as the correct value of $M[i,j]$. 

The time complexity of this query operation is $O(\log m)$. 

The correctness of this approach follows immediately from the properties of segment trees. 
\qed\end{proof}

Based on this data structure, we can show the following Lemma. 
\begin{lemma}\label{lem:inc}
\Problem{LCS-MC-INC} can be solved in $O(N \log^2 N)$.
\end{lemma}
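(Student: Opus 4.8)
The plan is to replace the whole family of Boolean matrices $M_1,M_2,\ldots$ used for \Problem{LCS-MC} by a single integer table $L$. For a \emph{match position} $(i,j)$, meaning a pair with $v[i]=w[j]$, let $L[i,j]$ be the length of the longest common $\gaptuple$-subsequence of $v$ and $w$ that admits matching embeddings into $v[1:i]$ and $w[1:j]$ (so its last symbol is mapped to $i$ in $v$ and to $j$ in $w$). The value sought by \Problem{LCS-MC-INC} is then $\max L[i,j]$ over all match positions (and $0$ if there are none). I would compute every $L[i,j]$ in a single sweep over the match positions in row-major order (increasing $i$, then increasing $j$), using the data structure $\mathcal{T}$ from Lemma~\ref{lem:2DSegmentTrees} to carry forward, from each processed cell, the information about which later cells it can be extended to.

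Concretely, when the sweep reaches a match position $(i,j)$, I first set $L[i,j]=\text{query}_{\mathcal{T}}(i,j)+1$, then read $p=L[i,j]$ and $(\ell,u)=\gaptuple[p]$, and finally call $\text{update}_{\mathcal{T}}(i+\ell+1,\,i+u+1,\,j+\ell+1,\,j+u+1,\,p)$, clamping the rectangle to $[m]\times[n]$. The intended invariant is that, at the instant $(i,j)$ is handled, $\text{query}_{\mathcal{T}}(i,j)$ returns $\max L[i',j']$ over all match positions $(i',j')$ that deposited their value into a rectangle containing $(i,j)$; such a deposit happens exactly when, writing $p'=L[i',j']$, the gaps $i-i'-1$ and $j-j'-1$ both lie in $[\ell_{p'},u_{p'}]$, which forces $i'<i$ and $j'<j$. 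Thus every relevant predecessor sits in a strictly earlier row and has already been processed, whereas deposits from the current row, or from cells with $j'\ge j$, begin at a row index exceeding $i$ or at a column index exceeding $j$ and hence never reach $(i,j)$; so the query value is as claimed.

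The delicate point --- and precisely where the \emph{increasing} hypothesis is used --- is that from each cell I deposit only the single value $p=L[i,j]$ with the rectangle of $\gaptuple[p]$, and never the shorter subsequences ending at $(i,j)$. This is legitimate by the monotonicity property of increasing tuples recorded after the definition of ``increasing'' in the Preliminaries: any length-$p'$ subsequence ending at $(i,j)$ with $p'<p$ satisfies $\gaptuple[p']\subseteq\gaptuple[p]$, so its associated rectangle is contained in that of $\gaptuple[p]$, while the deposited value $p$ is larger; hence the maximal deposit dominates every shorter one in both coverage and value. Granting this, correctness follows by induction along the row-major order. For the upper bound, a valid subsequence of length $\ell\ge 2$ ending at $(i,j)$ has its length-$(\ell-1)$ prefix ending at a match position $(i',j')$ with $L[i',j']\ge\ell-1$; since the $(\ell-1)$-st gap satisfies $\gaptuple[\ell-1]\subseteq\gaptuple[L[i',j']]$, the cell $(i',j')$ deposited a value $\ge\ell-1$ into $(i,j)$, so $L[i,j]\ge\ell$. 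For the matching lower bound, the predecessor attaining $\text{query}_{\mathcal{T}}(i,j)$ yields, by induction, an actual common constrained subsequence whose valid gap lets me extend it to length $L[i,j]$.

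For the running time, there are at most $N$ match positions, and each is handled by one point query in time $O(\log m)$ and one rectangle update in time $O(\log m\log n)$ by Lemma~\ref{lem:2DSegmentTrees}, with an additional $O(N)$ to enumerate the cells in order; since $\log m\log n\le\log^2 N$, the total is $O(N\log^2 N)$. The step I expect to be the main obstacle is the rigorous justification of this single-deposit rule: one must check that the rectangle-containment argument faithfully encodes the stated monotonicity property and that no valid predecessor is ever discarded, which is exactly the content that makes the increasing assumption indispensable.
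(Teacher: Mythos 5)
Your proposal is correct and follows essentially the same route as the paper's proof: the same matrix of longest constrained common subsequences ending at a match position $(i,j)$, the same row-major sweep using the two-dimensional segment tree of Lemma~\ref{lem:2DSegmentTrees} with a point query followed by a rectangle update over the range dictated by $\gaptuple[p]$, and the same use of the increasing property to justify propagating only the maximal value. The only differences are cosmetic (you add the $+1$ at query time rather than at deposit time, and you handle the base case uniformly instead of via a separate initialization of the first row and column).
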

\begin{proof}
{\em Main idea.} Our algorithm computes, one by one, the elements of an $m\times n$ matrix~$M$ (whose elements are initially set to $0$). The approach is to define $M[i,j]$, for each pair of positions $(i,j)\in [m]\times [n]$ such that $v[i]=w[j]$, to equal the length $p$ of the longest string $s_p$ which has matching embeddings into $v[1:i]$ and $w[1:j]$, respectively, satisfying $\gaptuple[1:p-1]$. Because $\gaptuple$ is increasing (for all $i\in [m-2]$, $\gaptuple[i]\subseteq \gaptuple[i+1]$), $p$ can be determined as follows. It is enough to extend with the symbol $a=v[i]=w[j]$ (mapped to position $i$ of $v$ and position $j$ of $w$) the longest subsequence $s_{p'}$, with $|s_{p'}|=p'$, such that the embeddings of $s_{p'}$ in $v$ and $w$ end on positions $i'$ and $j'$, respectively, where the gap between $i'$ and $i$ and the gap between $j'$ and $j$ fulfil the gap constraint $\gaptuple[p'+1]$. Indeed, this is enough: as $\gaptuple$ is increasing, this longest subsequence can be extended in exactly the same way as any other shorter subsequence with the same properties. Then, to obtain $s_{p'}$, it is enough to set $p=p'+1$ and extend $s_{p'}$ with the letter $a$, mapped to $v[i]$ and $w[j]$ in the two embeddings, respectively. 

However, when considering the position $(i,j)$, we do not know the value of $p'$, and, as such, we do not know the range where we need to look for $i'$ and $j'$. Therefore, we need to find a way around this.

{\em Dynamic programming approach.} We now show how to compute the elements of $M$. In the case of the dynamic programming algorithms solving {\Problem{LCS}}, the element $M[i,j]$ of the matrix $M$ is computed by looking at some elements $M[i',j']$, with $i'\leq i, j'\leq j, (i,j)\neq (i',j').$ By the arguments presented above, such an approach does not seem to work directly for {\Problem{LCS-MC-INC}}. However, if we know the value $p$ of some entry $M[i,j]$, we can be sure that $M[i'',j'']\geq p+1$ for all $i''$ and $j''$ such that $i+\ell+1 \leq i'' \leq i + u +1$ and $j+\ell+1 \leq j'' \leq j + u +1$, where $\gaptuple[p+1]=(\ell,u)$; we store this information. Moreover, if we know already all the values $M[i,j]$, with $i\leq i', j\leq j', (i,j)\neq (i',j'),$ then we have already seen (and stored) all possible values for $M[i',j']$ (or, in other words, all possible subsequences that we can extend in order to get $M[i,j]$), so we simply set $M[i',j']$ to the largest such possible value. 

So, we compute the elements $M[i,j]$ one by one, by traversing the elements of $M$ for $i$ from $1$ to $m$, for $j$ from $1$ to $n$, and proceed as follows. When we reach an element $M[i,j]$ in our traversal of $M$, we simply set it permanently to its current value. Then, if we set $M[i,j]$ to some value $p$, and $\gaptuple[p]=(\ell,u)$, we update each element $M[i',j']$ of submatrix $M[I,J]$, where $I=[i+\ell+1 : i + u +1]$ and $J=[j+\ell+1 : j + u +1]$, to be $M[i',j'] = \max\{M[i',j'],p+1\}$.  

{\em The algorithm.} First we define the matrix $M$, and initialize all its entries with $0$. Then, we build the data structure ${\mathcal T}$ from Lemma \ref{lem:2DSegmentTrees} for $M$. In an initial step, we set all values $M[1,j]=1$, where $v[1]=w[j]$, and $M[i,1]=1$, where $v[i]=w[1]$; this is done by using update-operations on ${\mathcal T}$ (to set the entry $M[i,j]=x$, for some $x>0$, given that $M[i,j]$ was equal to $0$, it is enough to execute update$(i,i,j,j,x)$). Further, we execute the following procedure.
\begin{itemize}
\item[1:] for $i=2$ to $m$ do
\item[2:] \hspace*{.5cm} for $j=2$ to $n$ do
\item[3:] \hspace*{1cm} Set $M[i,j]=$query$_{\mathcal T}(i,j)=p$; $M[i,j]$ remains equal to $p$ permanently; 
\item[4:] \hspace*{1cm} For $(\ell,u)=\gaptuple[p]$ update$_{\mathcal T}(i+\ell+1, i + u +1, j+\ell+1 ,j + u +1, p+1)$. 
\end{itemize}
The solution to \Problem{LCS-MC-INC} is the maximum element of $M$.
 
{\em Conclusion.} The correctness of our algorithm follows from the arguments presented above. The time complexity of the algorithm is $O(N\log^2 N)$, as we need $O(N\log^2 N)$ time for the preprocessing (setting up ${\mathcal T}$ and doing the initial updates on it). Then the 4-step procedure described above takes also $O(N\log^2 N)$ time, as in each iteration of the inner loop we perform as the most time consuming operation an update on ${\mathcal T}$. Our claim follows.
\qed\end{proof}

Summing up, we have shown the following theorem regarding \Problem{LCS-MC}. 
\begin{theorem}
\Problem{LCS-MC} can be solved in $O(Nk)$ time, where $k$ is the largest number for which there exists a common $\gaptuple[1:k-1]$-subsequence $s$ of $v$ and $w$.  \Problem{LCS-MC-INC} can be solved in $O(N \log^2 N)$.
\end{theorem}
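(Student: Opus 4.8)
The theorem is the conjunction of the two preceding lemmas, so the plan is to establish each running-time bound separately: the first statement is precisely Lemma~\ref{lem:specc} and the second is Lemma~\ref{lem:inc}, and it suffices to recall the strategy behind each.

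For the $O(Nk)$ bound on \Problem{LCS-MC}, I would work with the layered family of Boolean matrices $M_1,\ldots,M_{k+1}$, where $M_p[i,j]=1$ records that some length-$p$ common $\gaptuple[1:p-1]$-subsequence admits matching embeddings into $v[1:i]$ and $w[1:j]$. The base case $M_1$ (with $M_1[i,j]=1$ iff $v[i]=w[j]$) is immediate in $O(N)$ time, and the recurrence sets $M_p[i,j]=1$ exactly when $v[i]=w[j]$ and the window submatrix $M_{p-1}[I,J]$, with $I,J$ the index ranges determined by $C_{p-1}=(\ell,u)$, contains a $1$. The key to avoiding the naive $O(N^2k)$ cost is to answer these ``is there a $1$ in this axis-aligned window'' queries in amortised constant time: I would precompute, per level, a row-wise sliding-window sum $A$ and then a two-dimensional prefix-type sum $B$ over $A$, each entry obtained from a constant number of neighbours, so that all of $A$, $B$, and hence $M_p$ is computed in $O(N)$ time. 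Iterating over the $k+1$ levels yields $O(Nk)$.

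For the $O(N\log^2 N)$ bound on \Problem{LCS-MC-INC}, the monotonicity $\gaptuple[i]\subseteq\gaptuple[i+1]$ lets me collapse the $k$ layers into a single value matrix $M$, where $M[i,j]$ is the length of the longest admissible subsequence whose two embeddings end exactly at $i$ and $j$. The crucial point is that once $M[i,j]=p$ is fixed, every cell of the rectangle $[i+\ell+1:i+u+1]\times[j+\ell+1:j+u+1]$, with $(\ell,u)=\gaptuple[p]$, may safely inherit the candidate value $p+1$; because the constraints are nested, extending the longest such subsequence dominates extending any shorter one of the same type. I would therefore sweep $(i,j)$ in row-major order, first reading off the finalised value $M[i,j]$ and then pushing $p+1$ into the associated rectangle, using the two-dimensional segment tree of Lemma~\ref{lem:2DSegmentTrees}, whose range-max update and point query run in $O(\log m\log n)=O(\log^2 N)$. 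Over the $N$ cells this totals $O(N\log^2 N)$.

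The main obstacle in each part is conceptual rather than computational. For \Problem{LCS-MC} it is justifying that a window \emph{nonemptiness} test suffices, so that one never needs to track witnessing subsequences, and then engineering the two-level sum so each $M_p$ costs only $O(N)$; the prefix-sum bookkeeping at the boundaries (treating out-of-range indices as $0$) is the delicate step. For \Problem{LCS-MC-INC} the heart of the matter is the monotonicity argument licensing the single-matrix collapse — that it is always safe to extend the longest subsequence rather than a shorter one — together with recognising that the required update pattern (rectangle range-max combined with point queries) is exactly what Lemma~\ref{lem:2DSegmentTrees} provides.
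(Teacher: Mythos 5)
Your proposal is correct and follows essentially the same route as the paper: the theorem is indeed just the conjunction of Lemma~\ref{lem:specc} and Lemma~\ref{lem:inc}, and your recapitulations — the layered matrices $M_1,\ldots,M_{k+1}$ with the row-wise and two-dimensional sliding-window sums $A$ and $B$ for the $O(Nk)$ bound, and the single-matrix row-major sweep with rectangle range-max updates via the two-dimensional segment tree for the $O(N\log^2 N)$ bound — match the paper's arguments, including the monotonicity justification for collapsing to one matrix in the increasing case.
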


\paragraph{$O(N)$ solutions for \Problem{LCS-1C} and \Problem{LCS-O(1)C}.}
\label{sec:onec}
While this problem was already solved in \cite{IliopoulosEtAl2007}, we also briefly describe our solution for it. Our approach is based on the following data-structures lemma, which are also used to solve some of the other problems we discuss here.

\begin{lemma}
\label{lem:maxdeq}
Let $\Psi:[m]\times [n]\rightarrow \{0,1\}$ be a predefined function, such that $\Psi(i,j)$ can be retrieved in $O(1)$ time. Given an $m\times n$ matrix $M$, with all elements initially equal to $0$, and four positive integers $\ell_1\leq u_1,\ell_2\leq u_2$, we can maintain a data structure ${\mathcal D}$ for $M$, so that the following process runs in $O(N)$ time:
\begin{itemize}
\item[1:] for $i=1$ to $m$ do
\item[2:] \hspace*{.5cm} update ${\mathcal D}$ (set up for processing line $i$);
\item[3:] \hspace*{.5cm} for $j=1$ to $n$ do
\item[4:] \hspace*{1cm} update ${\mathcal D}$ (set up for computing $M[i,j]$);
\item[5:] \hspace*{1cm} use ${\mathcal D}$ to retrieve $\mathfrak m$, the maximum of the submatrix $M[I,J]$ \\
\hspace*{1.4cm}  where $I=[i-u_1:i-\ell_1]$ and  $J=[j-u_2:j-\ell_2]$; \\
 \hspace*{1cm} $\mathfrak m$ is set to be $0$ when $I$ or $J$ are empty. 
\item[6:] \hspace*{1cm} if $\Psi(i,j)=1$ then set $M[i,j]=\mathfrak m+1$.
\end{itemize}
\end{lemma}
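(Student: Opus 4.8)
My plan is to recognize the core task as an \emph{online two-dimensional sliding-window maximum}: for each cell $(i,j)$ the process must return the maximum of $M$ over the rectangle $I\times J$ with $I=[i-u_1:i-\ell_1]$ and $J=[j-u_2:j-\ell_2]$, a window of fixed shape that is merely translated as $(i,j)$ runs through the matrix in row-major order. I would reduce this two-dimensional query to two one-dimensional sliding-window maxima and maintain each of them with a \emph{monotone deque}, which answers a translating one-dimensional window maximum in amortized $O(1)$ time per step. The reduction is the factorization
\[
\max_{i'\in I,\, j'\in J} M[i',j'] \;=\; \max_{j'\in J}\Big(\max_{i'\in I} M[i',j']\Big),
\]
so that a vertical pass (maximum over the row-band $I$, maintained column by column) feeds a horizontal pass (maximum over the column-band $J$ inside the current row).

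A structural point that makes the online computation sound is that $\ell_1,\ell_2\geq 1$: hence every $i'\in I$ satisfies $i'\leq i-1$ and every $j'\in J$ satisfies $j'\leq j-1$, so the window $M[I,J]$ lies entirely in \emph{strictly earlier} rows and columns. Consequently, when the process reaches $(i,j)$ in row-major order, every entry the query depends on has already been finalized; in particular, the whole row-band $I$ is finalized before row $i$ is processed, which is what lets the vertical deques be updated before they are read.

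\emph{Data structure and the two update points.} I would keep $n$ vertical monotone deques $D_1,\dots,D_n$, one per column, and a single horizontal deque $D$ reused within each row. The per-row update (line~$2$) advances the vertical deques so that afterwards each $D_{j'}$ represents exactly the band $I=[i-u_1:i-\ell_1]$: push the finalized value $M[i-\ell_1,j']$ into $D_{j'}$ for every $j'$ (discarding from the back all stale smaller values) and expire from the front the value of row $i-u_1-1$ that just left the window; writing $V[j']$ for the front maximum of $D_{j'}$, this realizes $V[j']=\max_{i'\in I}M[i',j']$. The per-cell update (line~$4$) advances $D$ to the column band $J=[j-u_2:j-\ell_2]$ by pushing $V[j-\ell_2]$ and expiring $V[j-u_2-1]$. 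Then line~$5$ reads the front maximum $\mathfrak m$ of $D$ (returning $0$ when $D$ is empty, which covers the clipped and empty cases), so that $\mathfrak m=\max_{i'\in I,\,j'\in J}M[i',j']$ by the factorization; line~$6$ sets $M[i,j]=\mathfrak m+1$ when $\Psi(i,j)=1$, a value that may later be pushed into its own column deque.

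Correctness follows from the factorization together with the causality observation above, which guarantees that the values pushed in lines~$2$ and~$4$ (namely $M[i-\ell_1,\cdot]$ and $V[j-\ell_2]$) are final at the moment they are used. For the time bound I would invoke the standard amortized analysis of monotone deques: over the whole execution each entry $M[i',j']$ is pushed into and popped from its column deque $D_{j'}$ at most once, giving $O(mn)$ deque operations in the vertical pass, and each $V[j']$ is pushed/popped at most once from $D$ within its row, again $O(mn)$ in total; the queries of line~$5$ and the evaluation of $\Psi$ cost $O(1)$ each. Hence the process runs in $O(mn)=O(N)$ time. The main obstacle, and the place I would argue most carefully, is the timing and indexing of the two deque updates: one must verify that after line~$2$ each $D_{j'}$ holds \emph{exactly} the rows of $I$ (correct push index $i-\ell_1$ and expiry index $i-u_1-1$), and likewise for $D$ in line~$4$, so that the maintained window is precisely $I\times J$; once this alignment is pinned down, the amortized $O(1)$ per deque step is routine.
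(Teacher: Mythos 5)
Your proposal is correct and follows the same route the paper intends (its proof is deferred to the full version, but the label ``maxdeq'' names exactly this technique): a reduction of the translating two-dimensional window maximum to per-column and per-row one-dimensional sliding-window maxima maintained by monotone deques, with the observation that $\ell_1,\ell_2\geq 1$ guarantees the queried submatrix is already finalized in row-major order. The indexing of the push/expiry steps and the amortized $O(N)$ accounting are exactly as required.
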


\begin{proof}
{\em Preprocessing phase.} In the preprocessing, we define ${\mathcal D}$. This data structures contains a double ended queue (deque) $Q_f$ for each column $M[\cdot,f]$ of the matrix $M$, with $f\in [n]$, as well as an array $Max$ with $n$ elements. 

{\em Main idea.} In general we maintain the following invariant property for deque $Q_f$: the content of the deque $Q_f$ is the list of matrix entries (in order, from the first element in $Q_f$ to the last element of $Q_f$) $M[i_{1,f},f], \ldots, M[i_{e_f,f},f]$ of $M[\cdot,j]$ such that the following hold: 
\begin{itemize}
\item $i_{1,f}<i_{2,f}<\ldots <i_{e_f,f}$, 
\item $M[i_{1,f},f]> \ldots > M[i_{e_f,f},f]$, and 
\item $M[i_{g,f},f]>M[h,f]$ for all $i_{g-1,f}< h< i_{g,f}$, for $g\in [2:e_f]$.
\item After executing step $4$ of the process above for some values $i$ and $j$, the queues $Q_f$, with $f\in J$, only contain elements of the submatrix $M[I,J]$. 
\end{itemize}

{\em Implementation.} Initially, all deques in ${\mathcal D}$ are empty. 

The data structure ${\mathcal D}$ is updated as follows (in step $2$ of our process): for $f$ from $1$ to $n$, we remove from the deque $Q_f$ the element $M[i-1-d_1,f]$, if that was contained in $Q_f$. As we execute this step for all values $i$, it is clear that when we execute step $2$ for $i=a$, then the first element of $Q_f$ is  $M[e,f]$ for some $e\geq i-1-d_1$. Moreover, once this step is completed, we recompute the array $Max$ such that $M[f]$ is the maximum between the first element of $Q_f$ (i.\,e., the greatest element of $Q_f$) and $M[i-d_1+\ell_1,f]$. For the array $Max$ we construct in $O(n)$ time data structures allowing us to answer Range Maximum Queries in $O(1)$ time (see \cite{BenderF00}). At this point, we can retrieve in $O(1)$ time the maximum element in any subarray $Max[a:b]$, with $1\leq a\leq b\leq n$. 

Further, ${\mathcal D}$ is updated as follows (in step $4$ of our process).
\begin{itemize}
\item When we execute step $3$ of our algorithm for some values $i$ and $j$, we insert $M[i-d_1+\ell_1,j-d_2+\ell_2]$ in $Q_{j-d_2-\ell_2}$. 
\item The insertion of a value $x$ in the deque $Q_f$ for some $f$ is handled as follows: we traverse $Q_f$ from last element towards the first and remove all values smaller or equal to $x$. When we meet an element strictly greater than $x$ or we have emptied $Q_f$, we store the value $x$ as the last element of $Q_f$. This $x$ is now the smallest element of $Q_f$ (while the first element in $Q_f$ is the greatest element in $Q_f$). 
\end{itemize}
Note that after the execution of this update step, the first element of the deque $Q_f$ is exactly the element $Max[f]$, for all $f\in J$. 

So, using the Range Maximum Query structures constructed in step $2$, we can implement step $5$ as simply querying to find the greatest element of $Max$ over the range $J$. This is returned in $O(1)$ time.

{\em Conclusion.} The correctness of this implementation follows from the explanations given above. To analyse the complexity of the algorithm, we note first that step $2$ takes $O(n)$ time (and is executed $m$ times), steps $5$ and $6$ take $O(1)$ time (and are executed $mn$ times). To see how much time is spent in the execution of step $4$ over the entire execution of the algorithm it is enough to note that each element of $M$ is inserted once in one of the deques stored in ${\mathcal D}$, and the time spent in step $4$ is proportional to the number of elements removed from these deques. So, the time spent overall in the execution of this step is upper bounded by the number of elements inserted in the deques. Thus, step $4$ adds at most $O(mn)$ time to the overall complexity of the algorithm. In total, this means that the respective procedure runs $O(N)$ time, in the implementation described here.
\qed\end{proof}

We can now show immediately the following result.
\begin{theorem}
$\Problem{LCS-1C}$ can be solved in $\bigO(N)$ time.
\end{theorem}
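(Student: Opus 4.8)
The plan is to reduce $\Problem{LCS-1C}$ to the generic process described in Lemma~\ref{lem:maxdeq}, and then show that the data structure $\mathcal{D}$ promised by that lemma can actually be implemented in $O(N)$ total time for the fixed window $I=[i-u-1:i-\ell-1]$, $J=[j-u-1:j-\ell-1]$ that arises when all gap constraints equal $(\ell,u)$. First I would define the matrix $M$ by $M[i,j]=p$ exactly when $p$ is the length of the longest common $\gaptuple$-subsequence of $v[1:i]$ and $w[1:j]$ whose two embeddings end precisely at positions $i$ and $j$ (and $M[i,j]=0$ if $v[i]\neq w[j]$ or no such subsequence exists). Since every gap constraint is the single pair $(\ell,u)$, the recurrence is uniform: if $v[i]=w[j]$ then $M[i,j]$ equals $1$ plus the maximum over the submatrix $M[I,J]$ with $I=[i-u-1:i-\ell-1]$ and $J=[j-u-1:j-\ell-1]$, and this maximum is $0$ when no predecessor exists. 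This is precisely the recurrence computed by the loop in Lemma~\ref{lem:maxdeq}, taking $\Psi(i,j)=1$ iff $v[i]=w[j]$ (which is retrievable in $O(1)$ time after $O(N)$ preprocessing), $\ell_1=\ell_2=\ell+1$, and $u_1=u_2=u+1$. The answer to $\Problem{LCS-1C}$ is then the maximum entry of $M$.

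The correctness of this reduction follows from the optimal-substructure argument already implicit in the initial dynamic programming description: any common $\gaptuple$-subsequence $s$ of length $p$ ending at $(i,j)$ is obtained by appending the matching symbol $v[i]=w[j]$ to a common $\gaptuple$-subsequence of length $p-1$ ending at some $(i',j')$ with $\ell\le i-i'-1\le u$ and $\ell\le j-j'-1\le u$, i.e. with $(i',j')$ in the window $M[I,J]$; conversely any such predecessor can be extended, because a single fixed constraint $(\ell,u)$ governs every gap. Maximising the length of the extended subsequence is exactly maximising over $M[I,J]$, which is what step~5 of the generic process retrieves. Hence the maximum over all entries of $M$ is the length of the longest common $\gaptuple$-subsequence, as required. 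Invoking Lemma~\ref{lem:maxdeq} directly yields the running time $O(N)$.

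The crux of the argument, therefore, is establishing that the data structure $\mathcal{D}$ of Lemma~\ref{lem:maxdeq} indeed supports the process in total time $O(N)$ for this rectangular sliding-window maximum. I would describe $\mathcal{D}$ as a two-level sliding-window-maximum structure built on monotone double-ended queues: for each row $i$, a horizontal pass maintains, for every column $j$, the maximum of the length-$(u-\ell+1)$ window $M[i,\,j-u-1:j-\ell-1]$ using an amortised-$O(1)$ monotone deque; these per-column running maxima are then aggregated vertically over the $u-\ell+1$ relevant rows by a second family of monotone deques, one per column, again with amortised $O(1)$ per step. Because each of the $N$ cells is inserted into and removed from each deque at most once, the whole process runs in $O(N)$ time, and step~5 retrieves $\mathfrak{m}$ as the current deque front in $O(1)$.

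The main obstacle I anticipate is purely bookkeeping rather than conceptual: the two windows $I$ and $J$ are offset from the current cell $(i,j)$ (they lie strictly to the upper-left, at distance between $\ell+1$ and $u+1$), so the vertical and horizontal deques must be advanced with a lag of $\ell+1$ rows and $\ell+1$ columns and emptied of entries older than $u+1$, and one must verify that the aggregation order of the two deque levels produces the maximum over the full rectangle $M[I,J]$ rather than over a skewed or incomplete region. Once the indexing offsets and the empty-window boundary cases (where $\mathfrak{m}=0$) are handled correctly, the amortised analysis is routine and the $O(N)$ bound is immediate. Since Lemma~\ref{lem:maxdeq} already packages exactly this guarantee, the cleanest presentation is to defer this deque construction to the proof of that lemma and state the theorem as a direct corollary.
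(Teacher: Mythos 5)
Your proposal is correct and follows essentially the same route as the paper: define $M[i,j]$ as the length of the longest common constrained subsequence whose matching embeddings end at positions $i$ and $j$, observe the uniform windowed recurrence, and invoke Lemma~\ref{lem:maxdeq} with $u_1=u_2=u+1$, $\ell_1=\ell_2=\ell+1$, and $\Psi(i,j)=1$ iff $v[i]=w[j]$. Your additional sketch of the two-level monotone-deque implementation of $\mathcal{D}$ is a plausible reconstruction of the (omitted) proof of that lemma, but it is not needed for the theorem itself, which the paper likewise states as a direct corollary.
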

\begin{proof}
Let $(\ell,u)$ be the single gap-length constraint appearing in $\gaptuple$. We define the $m\times n$ matrix $M$, where $M[i,j]=p$ if and only if $p$ is the greatest number for which there exists a subsequence $s_p$, with $|s_p|=p$, such that there are two matching embeddings $e_v$ and $e_w$ of $s_p$ into $v[1:i]$ and $w[1:j]$, respectively, both satisfying $\gaptuple[1:p-1]$. We have that $M[i,j]=p$ if and only if $v[i]=w[j]$ and $p-1$ is the greatest number for which there exist $i'$ and $j'$ with $i-u-1\leq i'\leq i-\ell-1$ and $j-u-1\leq j'\leq j-\ell-1$ and $M[i',j']=p-1$.
Hence, the entries of $M$ can be computed using Lemma \ref{lem:maxdeq}, for $u_1=u_2=u+1$, $\ell_1=\ell_2=\ell+1$ and $\Psi(i,j)=1$ if and only if $v[i]=w[j]$. \looseness=-1
\qed\end{proof}

This result can be extended to the case of \Problem{LCS-O(1)C-SYNC}. It is, however, open if a similar result holds for the unrestricted problem {\Problem{LCS-O(1)C}. 

\begin{theorem}\label{thm:sync}
\Problem{LCS-O(1)C-SYNC} can be solved in $\bigO(N)$ time, where the constant hidden by the $O$-notation depends linearly on the number $h$ of distinct gap-length constraints of $\gaptuple$.
\end{theorem}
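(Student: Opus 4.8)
The plan is to extend the $\bigO(N)$ algorithm for \Problem{LCS-1C} to cope with the $h=\bigO(1)$ distinct constraints by running, in lockstep, one instance of the data structure of Lemma~\ref{lem:maxdeq} for each distinct constraint. Let $C_1,\dots,C_h$, with $C_r=(\ell_r,u_r)$, be the distinct constraints occurring in $\gaptuple$. For each position $(i,j)$ with $v[i]=w[j]$ and each $r\in[h]$ I would track $D_r[i,j]$, the length of the longest common $\gaptuple$-subsequence ending at $(i,j)$ (with matching embeddings into $v[1:i]$ and $w[1:j]$) whose \emph{next} gap, i.e.\ the gap following its last symbol, is governed by $C_r$ (equivalently, the longest such subsequence of some length $p$ with $\gaptuple[p]=C_r$). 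The sought answer is $\max_{i,j,r}D_r[i,j]$, up to the boundary handling of a subsequence of length $m$ that is not extended further, so it suffices to compute all the $D_r$.

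The transition is the following. To extend a subsequence ending at a predecessor $(i',j')$ to $(i,j)$, the two induced gaps must satisfy $\gaptuple[p']$, where $p'$ is the predecessor's length; this is precisely the constraint under which $(i',j')$ was recorded. Hence, for each predecessor-constraint $C_s$ I would query the $s$-th structure over the window $[i-u_s-1:i-\ell_s-1]\times[j-u_s-1:j-\ell_s-1]$, obtaining the maximal recorded length $p'_s$; the extension then produces a subsequence of length $p'_s+1$ whose next gap is governed by $\gaptuple[p'_s+1]$, and this value is routed into the structure indexed by $\gaptuple[p'_s+1]$. Reading off $\gaptuple[p'_s+1]$ is an $\bigO(1)$ lookup, and combining the $h$ candidates at $(i,j)$ costs $\bigO(h)$ per cell. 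Since each of the $h$ structures performs the double loop of Lemma~\ref{lem:maxdeq} in $\bigO(N)$ time and all are advanced together, the total running time is $\bigO(hN)=\bigO(N)$, with the constant depending linearly on $h$ as claimed. (Lemma~\ref{lem:maxdeq} is stated for a single matrix, single window and fixed predicate $\Psi$; the only adaptation is to drive the $h$ copies synchronously and let the value written into one structure depend on the window-maxima just read from the others, which does not affect the amortised bound.)

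The heart of the argument is correctness, and this is exactly where synchronization is indispensable. The delicate point is that for each cell I keep only the \emph{maximum} length in each constraint class, not all achievable lengths, and I must show this loses no optimal solution. Soundness is immediate: every value written corresponds to a genuine common $\gaptuple$-subsequence with valid gaps. For completeness I would argue by induction on the prefixes of an optimal solution $S$ of length $L$, with a fixed embedding: the invariant is that for every prefix $S_q$ of $S$, ending at $(i_q,j_q)$, the algorithm records at $(i_q,j_q)$ a subsequence of length $q'\ge q$ whose next-constraint satisfies $\gaptuple[q]\subseteq\gaptuple[q']$. In the inductive step the gap of $S$ from $(i_q,j_q)$ to $(i_{q+1},j_{q+1})$ satisfies $\gaptuple[q]$, hence, by the inductive inclusion, also $\gaptuple[q']$; thus $(i_q,j_q)$ lies in the window queried for $\gaptuple[q']$, so the window-maximum $p'$ obeys $p'\ge q'\ge q$. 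The predecessor realising $p'$ shares the next-constraint $\gaptuple[q']$, i.e.\ $\gaptuple[p']=\gaptuple[q']$ with $q'\le p'$; invoking the synchronization consequence stated in the preliminaries (for both $v$ and $w$, together with $\gaptuple[q'+e]\subseteq\gaptuple[p'+e]$) the extension goes through and yields at $(i_{q+1},j_{q+1})$ a recorded subsequence of length $p'+1\ge q+1$, whose next-constraint $\gaptuple[p'+1]$ I then need to show contains $\gaptuple[q+1]$, so as to re-establish the invariant.

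The main obstacle I anticipate is precisely closing this induction when $p'>q$: the algorithm may legitimately track a strictly longer subsequence than the prefix of $S$, and by synchronization its continuation is necessarily \emph{more} permissive, so I must verify that this admissible ``drift'' never prevents the algorithm from eventually reaching length $L$. Making the invariant robust to it — maintaining the length domination $q'\ge q$ and the inclusion $\gaptuple[q]\subseteq\gaptuple[q']$ simultaneously, and checking that the synchronization inequality $\gaptuple[i+e]\subseteq\gaptuple[j+e]$ steps both of them forward through the inductive step (in particular the passage from $\gaptuple[q]\subseteq\gaptuple[q']$ to $\gaptuple[q+1]\subseteq\gaptuple[p'+1]$) — is the crux, and it is here that the \Problem{LCS-O(1)C-SYNC} restriction does the essential work that appears to be unavailable for the unrestricted \Problem{LCS-O(1)C}.
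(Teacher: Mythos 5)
Your algorithm is the one the paper intends: run one copy of the Lemma~\ref{lem:maxdeq} structure per class of equal constraints, and keep, per cell and per class, only the maximum achievable length; this is exactly what the factor $h$ in the theorem's statement and the remark following it (about partitioning the positions of $\gaptuple$ into classes carrying equal constraints together with the shifted inclusions) point to. The design, the routing of a new value of length $p'+1$ into the class of $\gaptuple[p'+1]$, and the $\bigO(hN)$ accounting are all fine.

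The genuine gap is the inductive step you yourself flag as ``the crux'' and leave unverified, and with the invariant as you state it the step indeed fails: from $\gaptuple[q]\subseteq\gaptuple[q']$ alone you cannot conclude $\gaptuple[q+1]\subseteq\gaptuple[q'+1]$, because the synchronization property only yields $\gaptuple[i+e]\subseteq\gaptuple[j+e]$ when $\gaptuple[i]=\gaptuple[j]$, i.e., for two positions in the \emph{same} class; mere inclusion between the constraints at positions $q$ and $q'$ says nothing about positions $q+1$ and $q'+1$. The repair is to strengthen the invariant to the entire shifted family: the value $q'\ge q$ recorded at $(i_q,j_q)$ satisfies $\gaptuple[q+e]\subseteq\gaptuple[q'+e]$ for \emph{all} $e\ge 0$ (not just $e=0$). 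This does propagate. Taking $e=0$ shows the gap of the optimal solution from $(i_q,j_q)$ to $(i_{q+1},j_{q+1})$ satisfies $\gaptuple[q']$, so $(i_q,j_q)$ lies in the window queried for the class of $\gaptuple[q']$; the window maximum $p'$ satisfies $p'\ge q'$ and $\gaptuple[p']=\gaptuple[q']$ with $q'\le p'$, so synchronization (now legitimately applied, since the constraints are \emph{equal}) gives $\gaptuple[q'+e]\subseteq\gaptuple[p'+e]$, which composes with the inherited family to $\gaptuple[q+e]\subseteq\gaptuple[p'+e]$ for all $e\ge 0$; shifting $e\mapsto e+1$ yields precisely the family needed for the prefix of length $q+1$ and the candidate $p'+1$, and one further application of synchronization inside the class of $\gaptuple[p'+1]$ covers the case where $p'+1$ is itself displaced there by a larger recorded value. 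The base case $q=1$ holds because the length-$1$ candidate is always offered to the class of $\gaptuple[1]$. With this strengthened invariant the induction closes and your argument becomes a complete proof.
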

The result of Theorem \ref{thm:sync} holds, in fact, for a larger family of constraints, namely constraints whose elements can be partitioned in $h\in O(1)$ classes, such that, for all $1\leq i<j\leq k-1$, if $i,j$ are in the same class of the partition then $\gaptuple[i]=\gaptuple[j]$ and $\gaptuple[i+e] \subseteq \gaptuple[j+e]$ for all $e\geq 0$ such that $i+e\leq j+e\leq m-1$. \looseness=-1

\begin{proof}
{\em Preprocessing phase.} Assume $(\ell'_1,u'_1), \ldots, (\ell'_h,u'_h)$ is an enumeration of the distinct constraints from the set $\{(\ell_i,u_i)\mid i\in [m-1]\}$, where $h$ is a constant and $(\ell'_g,u'_g)\leq (\ell'_{g+1},u'_{g+1})$ (w.r.t. canonical ordering of pairs of natural numbers). In $O(m+n)$ time, we can radix-sort the list constraints of $(\ell_i,u_i)$, with $ i\in [m-1]$, and obtain the list $(\ell'_1,u'_1), \ldots, (\ell'_h,u'_h)$, as well as an array $label$ with $m-1$ elements, where $label[i]=j$ if and only if $(\ell_i,u_i)=(\ell'_j,u'_j)$; we say, for simplicity, that $label[i]$ defines the gap $(\ell'_{label[i]}, u'_{label[i]})$. Now we move on to the description of the main algorithm. 

{\em Main idea}. Our approach is to compute for each pair of positions $(i,j)\in [m]\times [n]$, such that $v[i]=w[j]=a$, and each $r\in [h]$ the longest common subsequence $s_p$ of $v[1:i]$ and $w[1:j]$, with $|s_p|=p$ and $label(p)=r$, which fulfils $\gaptuple[1:p-1]$ and the last symbol of $s_p$ is mapped to $v[i]$ in the embedding of $s_p$ in $v[1:i]$ and to $w[j]$ by the embedding of $s_p$ in $w[1:j]$. This is obtained by extending a common subsequence $s_{p-1}$ of $v$ and $w$, with $|s_{p-1}|= p-1$, which fulfils $\gaptuple[1:p-2]$ and whose last symbol is mapped to position $i'$ of $v$ and to position $j'$ of $w$, such that the gap between $i'$ and $i$ and the gap between $j'$ and $j$ fulfil the gap constraint defined by $r'=label[p-1]$. 

The main observation here is that, due to the synchronization property of $\gaptuple$ (for all $i,j\in [m-1]$, if $\gaptuple[i]=\gaptuple[j]$ and $i\leq j$ then $\gaptuple[i+e]\subseteq \gaptuple[j+e]$ for all $e\geq 0$), $p$ can be determined as follows. For each $r'\in [h]$, it is enough to extend with a symbol $a$ (mapped to position $i$ of $v$ and position $j$ of $w$) only the longest subsequence $s_{m_{r'}}$ such that $label(m_{r'})=r'$ and the embeddings of $s_{m_{r'}}$ in $v$ and $w$ end on positions $i'_{r'}$ and $j'_{r'}$, respectively, where the gap between $i'$ and $i$ and the gap between $j'$ and $j$ fulfil the gap constraint defined by $r'$. Indeed, this is enough: due to the synchronization of $\gaptuple$, this longest subsequence can be extended in exactly the same way as any other shorter subsequence with the same properties. Then, to obtain $s_p$, it is enough to extend the longest of the subsequences $s_{m_{r'}}$, for $r'\in [h]$, such that $label[m_{r'}+1]=r$. This naturally leads to a dynamic programming algorithm, which we describe in the following.

{\em Dynamic programming.}  For each $r\in[h]$, we define an $m\times n$ matrix $M_{r}$, where $M_{r}[i,j]=p$ if and only if $v[i]=w[j]$ and $p$ is the greatest number for which $\label[p]=r$ and there exists a $\gaptuple[1:p-1]$-subsequence $s_p$ of $v[1:i]$ and $w[1:j]$ for which there are two embeddings $e_v$ and $e_w$ of $s_p$ into $v[1:i]$ and $w[1:j]$, respectively, with $e_v(p)=i,e_w(p)=j$. 

According to the definition of $M_r$ and the observations made above, we have that $M_r[i,j]=p$ if and only if $v[i]=w[j]$ and $p-1$ is the greatest number for which $label[p]=r$ and there exist $i'$ and $j'$ with $i-u'_{r'}-1\leq i'\leq i-\ell'_{r'}-1$ and $j-u'_{r'}-1\leq j'\leq j-\ell'_{r'}-1$  such that $M_{r'}[i',j']=p-1$, for some $r'\in [h]$. That is, $M_r[i,j]=p$ if there exists an embedding of subsequence of length $p-1$ which fulfils the gap constraints and which can be extended (while still fulfilling the constraints) to a subsequence of length $p$, whose last symbol is embedded in $v[i]$ and $w[j]$, respectively, and, moreover, the label of the $p^{th}$ gap (the one following the newly found $p^{th}$ symbol) is $r$. 

So, to compute $M_r[i,j]$ we first have to compute for all $r'\in [h]$ the maximum value $m_{r'}$ of $M_{r'}[I_{r'},J_{r'}]$, for $I_{r'}=[i-u'_{r'}-1: i-\ell'_{r'}-1]$ and $J_{r'}=[j-u'_{r'}-1: j-\ell'_{r'}-1]$. Then, we simply set $M_r[i,j]=1 + \max_{r'\in [h], label[m_{r'}+1]=r} m_{r'}$. 

Now, what remains to be explained is how to retrieve the maximum value $p_{r'}$ of $M_{r'}[I_{r'},J_{r'}]$, for $I_{r'}=[i-u'_{r'}-1: i-\ell'_{r'}-1]$ and $J_{r'}=[j-u'_{r'}-1: j-\ell'_{r'}-1]$. For this we use Lemma \ref{lem:maxdeq}, as shown below.

{\em The algorithm}. We initialize the $m\times n$ matrices $M_r$, for $r\in [h]$, such that all their entries are $0$, and define for each of them a data structure ${\mathcal D}_r$ as in Lemma \ref{lem:maxdeq}, with the four input numbers being $\ell'_r+1, u'_r+1,  \ell'_r+1, u'_r+1$. Then, we adapt the procedure of Lemma \ref{lem:maxdeq} to compute these matrices as follows:
\begin{itemize}
\item[1:] for $i=1$ to $m$ do
\item[2:] \hspace*{.5cm} update ${\mathcal D}_{r'}$, for $r'\in [h]$;
\item[3:] \hspace*{.5cm} for $j=1$ to $n$ do
\item[4:] \hspace*{1cm} update ${\mathcal D}_{r'}$, for $r'\in [h]$;
\item[5:] \hspace*{1cm} for $r'\in[h]$, use ${\mathcal D}_{r'}$ to retrieve $m_r$, the maximum of $M_{r'}[I_{r'},J_{r'}]$ \\
\hspace*{1.4cm}  where $I_{r'}=[i-u'_{r'}-1: i-\ell'_{r'}-1]$ and $J_{r'}=[j-u'_{r'}-1: j-\ell'_{r'}-1]$; \\
 \hspace*{1cm} $m_{r'}$ is set to be $0$ when $I_{r'}$ or $J_{r'}$ are empty. 
\item[6:] \hspace*{1cm}  Compute set $max_r=\max_{r'\in [h], label[m_{r'}+1]=r} m_{r'}$, for all $r\in [h]$; 
\item[7:] \hspace*{1cm} for $r\in[h]$, if $v[i]=w[j]$ then set $M_r[i,j]=1+max_r$.
\end{itemize}

The result of \Problem{LCS-O(1)-SYNC} is given by the maximum value stored in one of the matrices $M_r$ at the end of the computation. 

{\em Conclusion.} The correctness of the algorithm follows from the explanations given above. Its time complexity is $O(N)$, based on the result of Lemma \ref{lem:maxdeq} and on the fact that $h\in O(1)$. 
\qed\end{proof}

%
%
%
%

%
%
%

First, we analyse the problem \Problem{LCS-$\Sigma$R}. The input of this problem consists in two words $v$ and $w$ and one function $right:\sigma \rightarrow [n]\times [n]$ with $right(a) = (\ell_a, u_a)$ for all $a \in \Sigma$ (that is, for the $right$-function we have $right(a)=(0,n)$ for all $a\in \Sigma$). The approach we use is based on Lemma \ref{lem:maxdeq}, as in the solution to \Problem{LCS-O(1)C-SYNC}. 

\begin{lemma}\label{lem:lcsSRsigma}
\Problem{LCS-$\Sigma$R} can be solved in $O(N\sigma)$ time.
\end{lemma}
\begin{proof}
{\em Main idea}. Our approach is to compute for each pair of positions $(i,j)\in [m]\times [n]$, such that $v[i]=w[j]=a$, the longest $(right)$-subsequence $s_p$ of both $v$ and $w$, where $|s_p|=p$ and the last symbol of $s_p$ is mapped to $v[i]$ in the embedding of $s_p$ in $v[1:i]$ and to $w[j]$ by the embedding of $s_p$ in $w[1:j]$. This can be obtained by simply extending with the symbol $a=v[i]=w[j]$ the longest $(right)$-subsequence $s_{r}$, of length $r$, whose last symbol is mapped to position $i'$ of $v$ and to position $j'$ of $w$, such that the gap between $i'$ and $i$ and the gap between $j'$ and $j$ fulfil the gap constraint defined by $right(a)$; clearly, $p$ is then defined as $r+1$. 

{\em Dynamic Programming.} We compute an $m\times n$ matrix $M$, where $M[i,j]$ is length of the longest $(right)$-subsequence $s_p$ of both $v$ and $w$, where $|s_p|=p$ and the last symbol of $s_p$ is mapped to $v[i]$ in the embedding of $s_p$ in $v[1:i]$ and to $w[j]$ by the embedding of $s_p$ in $w[1:j]$; $M[i,j]=0$ if and only if $v[i]\neq w[j]$. To compute $M[i,j]$ we need to retrieve the largest entry $M_a$ of the submatrix $M[I_a,J_a]$, where $I_a=[i-u_a-1:i-\ell_a-1]$ and $J_a=[j-u_a-1:j-\ell_a-1]$, and set $M[i,j]=M_a+1$. We can proceed as follows.

{\em The algorithm}. We initialize the $m\times n$ matrices $M$ such that all their entries are $0$, and define for each $a\in \Sigma$ a data structure ${\mathcal D}_a$ as in Lemma \ref{lem:maxdeq}, for the matrix $M$ with the four input numbers being $\ell_a+1, u_a+1,  \ell_a+1, u_a+1$, where $right(a)=(\ell_a,u_a)$. Then, we adapt the procedure of Lemma \ref{lem:maxdeq} to work as follows:
\begin{itemize}
\item[1:] for $i=1$ to $m$ do
\item[2:] \hspace*{.5cm} update ${\mathcal D}_{b}$, for $b\in \Sigma$;
\item[3:] \hspace*{.5cm} for $j=1$ to $n$ do
\item[4:] \hspace*{1cm} update ${\mathcal D}_{b}$, for $b\in '\Sigma$;
\item[5:] \hspace*{1cm} if $w[j]=v[i]$, let $a=w[j]$;
\item[5:] \hspace*{1cm} use ${\mathcal D}_{a}$ to retrieve $M_a$, the maximum of $M[I_{a},J_{a}]$ \\
\hspace*{1.4cm}  where $I_a=[i-u_a-1:i-\ell_a-1]$ and $J_a=[j-u_a-1:j-\ell_a-1]$; \\
 \hspace*{1cm} $M_{a}$ is set to be $0$ when $I_{a}$ or $J_{a}$ are empty. 
\item[6:] \hspace*{1cm}  set $M[i,j]=1+M_a$.
\end{itemize}

{\em Conclusion.} The correctness of the algorithm follows from the explanations given above. Its time complexity is $O(N\sigma)$, based on the result of Lemma \ref{lem:maxdeq} and on the fact that we need to maintain $\sigma$ data structures ${\mathcal D}_a$, for $a\in \Sigma$, and each of them can be maintained in overall time $O(N)$. 
\qed\end{proof}

We now present another algorithm for \Problem{LCS-$\Sigma$R}, running in $O(N \log m)$ time. 

This algorithm uses a special case of the two-dimensional Range Maximum Query data structure (for short, RMQ), which is able to answer maximum queries on square sized submatrices of a matrix $M$ of size $m \times n$ in $O(\log (m))$ time, and which allows a special type of updates needed in our solution for \Problem{LCS-$\Sigma$R}. This structure extends the Sparse Table approach from \cite{BenderF00}.

\paragraph{The two dimensional Range Maximum Query structure.}

At its base, our RMQ data structure maintains an $m \times n \times (1+\lceil \log m \rceil)$ array $RMQ[\cdot,\cdot,\cdot]$, such that for $1 \leq i\leq m$ and $1\leq j \leq n$ and $0 \leq q \leq \lceil \log m \rceil$, $RMQ[i,j,q]$ stores the maximum of the submatrix $M[I,J]$ with $I= [i-2^q+1:i]$ and $J=[j-2^q+1:j]$. For simplicity, we assume that $M[i,j]=0$ if $i\leq 0$ or $j\leq 0$, and $RMQ[i,j,q]$ is defined as $0$ if $i\leq 0$ or $j\leq 0$. Further, in a linear time preprocessing phase we can compute an array $Q[1:n] $ where, for $h\in [m]$, we have $Q[h]=\max\{q \in \mathbb{N}\cup\{0\} \mid 2^q\leq h\}$ (in other words, $Q[h]=\lfloor \log h\rfloor$). Let us now see how to retrieve the maximum of an arbitrary square submatrix of $M[I,J]$ with $I=[i':i'']$ and $J=[j':j'']$ (with $|i''-i'| = |j''-j'|$) in constant time, once we have computed the three dimensional array $RMQ$.

\begin{lemma}
Given $RMQ[\cdot,\cdot,\cdot]$, we can retrieve $\max M[I,J]$ in $O(1)$.
\label{lem:rmqquery}
\end{lemma}
\begin{proof}
As said, assume that we want to compute the largest value of $M[I,J]$ with $I=[i':i'']$ and $J=[j':j'']$. At first we need to determine the largest $q$, such that a square of size $2^q$ completely fits into the square $M[I,J]$. That is to find a maximal $q$ with $2^q \leq |i'-i''| = |j'-j''| < 2^{q+1}$, so $q=Q[|i'-i''|]$.

We now claim that the maximum of the values $RMQ[i'+2^q,j'+2^q,q]$, $RMQ[i'+2^q,j'',q], RMQ[i'',j'+2^q,q], RMQ[i'',j'',q]$ is a maximum in $M[I,J]$. We distinguish two cases.

\textbf{Case 1:} $2^q = |i'-i''| = |j'-j''|$.
In this case we replace $2^q$ in all four cases by $|i'-i''|$ and get immediately four times the lookup $RMQ[i'',j'',q]$, that is by definition of $RMQ[\cdot,\cdot,\cdot]$ the desired answer.

\textbf{Case 2:} $2^q < |i'-i''| = |j'-j''| < 2^{q+1}$.
Let us have a look at the sets of values considered by the single $RMQ$ look up operations. We have $M[I_1,J_1], M[I_1,J_2], M[I_2,J_1], M[I_2,J_2]$ with $I_1=[i':i'+2^q-1]$, $J_1=[j':j'+2^q-1]$, $I_2=[i''-2^q+1:i'']$ and $J_2=[j''-2^q+1:j'']$ . Because we have $2^q < |i'-i''| < 2^{q+1}$ and $2^q < |j'-j''| < 2^{q+1}$, we have $I_1 \cap I_2 \neq \emptyset$ and $J_1 \cap J_2 \neq \emptyset$. Therefore, we completely cover all values of $M[I,J]$.
\qed\end{proof}

Now that we have understood how the array $RMQ$ is used, we see how to calculate $RMQ[i,j,q]$ for all $q$ initially. Clearly, $RMQ[i,j,0]=M[i,j]$, for all $i\in [m]$ and $j\in [n]$. Then we use a dynamic programming approach. We assume that all the entries $RMQ[i,j,q-1]$ for $1 \leq i \leq m$ and $1 \leq j \leq n$ have been computed already. Then, $RMQ[i,j,q] = \max \set{RMQ[i-2^{q-1},j-2^{q-1},q-1], RMQ[i,j-2^{q-1},q-1], RMQ[i-2^{q-1},j,q-1], RMQ[i,j,q-1]}$. Cleary, we need $O(N)$ time to compute $RMQ[\cdot,\cdot,q]$. So, to compute the entire array $RMQ$ we need $O(N\log m)$ time. 

\paragraph{A RQM-based solution for \Problem{LCS-$\Sigma$R}.} Our second solution to \Problem{LCS-$\Sigma$R} is described in the following. 

\begin{lemma}\label{lem:lcsSRlog}
\Problem{LCS-$\Sigma$R} can be solved in $O(N\log m)$ time.
\end{lemma}
\begin{proof}
{\em Main idea.} We compute an $m \times n$ array $M[\cdot,\cdot]$ such that $M[i,j]=p$ if and only if $p$ is the largest number for which there exists a $(right)$-subsequence $s_p$ of $v[1:i]$ and $w[1:j]$ such that there are two embeddings $e_v$ and $e_w$ of $s_p$ into $v[1:i]$ and $w[1:j]$, respectively, with $e_v(p)=i,e_w(p)=j$. 

The main observation is that $M[i,j]=p$ if and only if $v[i]=w[j]=a$, where $right(a)=(\ell_a,u_a)$, and the maximum value in $M[I,J]$, for $I=[i-u_a-1:i-\ell_a-1]$ and $J=[j-u_a-1:j-\ell_a-1]$, is $p-1$. That is, there exist $i'\in I$ and $j'\in J$ such that $M[i',j']=p-1$, and this indicates the existence of a $(right)$-subsequence $s_{p-1}$ of $v[1:i']$ and $w[1:j']$, with $|s_{p-1}|=p-1$, and whose last symbol is mapped, respectively, to $v[i']$ and $w[j']$, and, moreover, this subsequence $s_{p-1}$ can be extended to a $(right)$-subsequence $s_{p}$ of $v[1:i]$ and $w[1:j]$ whose last symbol is mapped, respectively, to $v[i]$ and $w[j]$. 

{\em Preprocessing.} To begin with, we set all values of $M$ to be $0$. Then, for $j$ from $1$ to $n$, we set $M[1,j]=1$ if $v[1]=w[j]$, and for $i$ from $1$ to $m$, we set $M[i,1]=1$ if $v[i]=w[1]$. We also compute the data structure $RMQ[\cdot,\cdot,\cdot]$ for $M$. 

{\em Dynamic programming algorithm.} We compute the entries of $M[\cdot,\cdot]$ as follows:
\begin{itemize}
\item[1:] for $i=2$ to $m$ do
\item[2:] \hspace*{.5cm} for $j=2$ to $n$ do
\item[3:] \hspace*{1cm} get the constraint $right(a)=(\ell_a,u_a)$ for $a = v[i] = w[j]$; 
\item[4:] \hspace*{1cm} set $I=[i-u_a-1:i-\ell_a-1]$ and $J=[j-u_a-1:j-\ell_a-1]$; 
\item[5:] \hspace*{1cm} set $M[i,j]$ as the maximum of $M[I,J]$, computed using $RMQ$; 
\item[6:] \hspace*{1cm} set $RMQ[i,j,0]=M[i,j]$; 
\item[7:] \hspace*{1cm} for $q=1$ to $\lceil \log m \rceil$
\item[8:] \hspace*{1.5cm} Set $RMQ[i,j,q] = \max \set{RMQ[i-2^{q-1},j-2^{q-1},q-1], \\
\hspace*{2cm}  RMQ[i,j-2^{q-1},q-1], RMQ[i-2^{q-1},j,q-1], RMQ[i,j,q-1]}$
\end{itemize}

After we have computed all entries of $M$, we can simply compute its maximum in $O(N)$ time, and return it as the output value for the \Problem{LCS-$\Sigma$R}.

{\em Conclusion.} The correctness of our algorithm can be shown as follows. Clearly the entries of the arrays $M[1,\cdot]$ and $M[\cdot,1]$ are correctly computed, and the data structure $RMQ$ is correctly initialized. Assume now that $M[i',j']$ is correctly computed and the data structure $RMQ$ correctly returns $RMQ[i',j',q]$ for all $i'\leq i$ and $j'\leq j$ such that $(i',j')\neq (i,j)$. By our observations made above, the computation from step $5$ of $M[i,j]$ is therefore correct. Moreover, once $M[i,j]$ is computed, steps $6,7,8$ ensure that our $RMQ$ data structure correctly returns $RMQ[i',j',q]$ for all $i'\leq i$ and $j'\leq j$ such that $(i',j')\neq (i,j)$. So, by induction, it follows that all the entries of $M$ are correctly computed.

The overall time complexity of the algorithm is, clearly, $O(N\log m)$. 
\qed\end{proof}

\paragraph{Solutions for \Problem{LCS-$\Sigma$L} and \Problem{LCS-$\Sigma$}.} Further, we consider the problem \Problem{LCS-$\Sigma$L}, for which $right(a)=(0,n)$ for all $a\in \Sigma$. The input of this problem consists in two words $v,w$ and one function $left$ with $left(a) = (\ell_a, u_a)$ for all $a \in \Sigma$. We can immediately transform this problem into \Problem{LCS-$\Sigma$R} with input words $v^R$ and $w^R$ (i.\,e., the mirror images of the input words) and the function $right'$ which defines the gap constraints, where $right'(a)=left(a)$, for all $a\in \Sigma$. Then we can use the solutions we have seen in Lemmas \ref{lem:lcsSRsigma} and \ref{lem:lcsSRlog} to get a solution for \Problem{LCS-$\Sigma$L}.

\begin{theorem} \label{thm:lcsSR}
\Problem{LCS-$\Sigma$R}, \Problem{LCS-$\Sigma$L} can be solved in $O(\max\{N\sigma,N \log m\})$ time.\looseness=-1
\end{theorem}

Our approaches can be generalized to solve the general problem \Problem{LCS-$\Sigma$}. 

\begin{theorem}\label{lem:lcsS}
\Problem{LCS-$\Sigma$} can be solved in $O(\min\{N\sigma^2, N\sigma\log m\})$ time.
\end{theorem}
\begin{proof}
{\em Main idea}. As in the case of \Problem{LCS-$\Sigma$R}, our approach is to compute for each pair of positions $(i,j)\in [m]\times [n]$, such that $v[i]=w[j]=a$, the longest $(left,right)$-subsequence $s_p$ of both $v$ and $w$, where $|s_p|=p$ and the last symbol of $s_p$ is mapped to $v[i]$ in the embedding of $s_p$ in $v[1:i]$ and to $w[j]$ by the embedding of $s_p$ in $w[1:j]$. This can be obtained by extending with the symbol $a=v[i]=w[j]$ the longest $(left,right)$-subsequence $s_{r}$, of length $r$, whose last symbol, say $b$, is mapped to position $i'$ of $v$ and to position $j'$ of $w$, such that the gap between $i'$ and $i$ and the gap between $j'$ and $j$ fulfils the gap constraint defined by the pair $(left(b),right(a))$; clearly, $p$ is then defined as $r+1$. 

To find, for some $i$ and $j$ such that $v[i]=w[j]$, the longest $(left,right)$-subsequence $s_{r}$ which can be extended with the symbol $a=v[i]=w[j]$ to a longer $(left,right)$-subsequence we proceed as follows. Let $right(a)=(\ell_a,u_a)$. For each letter $b\in \Sigma$, let $left(b)=(\ell'_b,u'_b)$. We compute the pair $(\ell_{ab},u_{ab})=(\max\{\ell_a,\ell'_b\}, \min\{u_a,u'_b\})$ and let $I_{ab}=[i-u_{ab}-1:i-\ell_{ab}-1]$ and $J_{ab}=[j-u_{ab}-1:j-\ell_{ab}-1]$. Then, we compute $M_{ab}$ the maximum entry $M[i'][j']$ of $M[I_{ab}][J_{ab}]$ with $v[i']=w[j']=b$. Therefore, we need a mechanism allowing us to extract from $M$ the maximum from a submatrix, but only consider the entries of this submatrix that correspond to a certain letter.

{\em Dynamic Programming.} To achieve this, we compute $\sigma$ $m\times n$ matrices $M_b$, for all $b\in \Sigma$ and an $m\times n$ matrix $M$. We define $M[i,j]$ as the longest $(left,right)$-subsequence $s_p$ of both $v$ and $w$, where $|s_p|=p$ and the last symbol of $s_p$ is mapped to $v[i]$ in the embedding of $s_p$ in $v[1:i]$ and to $w[j]$ by the embedding of $s_p$ in $w[1:j]$; clearly, $M[i,j]\neq 0$ if and only if $v[i]=w[j]$. Then, for all $b\in \Sigma$, $M_b[i,j]$ is initialized as $0$ and whenever we set $M[i,j]=x$, if $v[i]=w[j]=b$ then we also set $M_b[i,j]=x$.  

Now, to compute $M[i,j]$, if $v[i]=w[j]=a$, we need to retrieve, for all $b\in \Sigma$, the largest entry $M_{ab}$ of the submatrix $M_b[I_{ab},J_{ab}]$, and set $M[i,j]=\max_{b\in \Sigma} M_{ab}+1$. 

To avoid repeating the same algorithms again, we only describe how this is done informally. On the one hand, we can use the approach from Lemma \ref{lem:lcsSRsigma}, and maintain data structures ${\mathcal D_{ab}}$ for each matrix $M_b$, for all $a,b\in \Sigma$ (so, in total, maintain $\sigma^2$ such data structures). Then, using the same algorithmic approach outlined in Lemma \ref{lem:maxdeq}, the computation of all entries $M[i,j]$ can be done in $O(N\sigma^2)$ time. On the other hand, we can use the approach from Lemma \ref{lem:lcsSRlog}, and maintain RMQ data structures for each matrix $M_b$, for all $b\in \Sigma$. Using the same algorithmic approach as in Lemma \ref{lem:lcsSRlog}, the computation of all entries $M[i,j]$ can be done in $O(N\sigma\log m)$ time. 

{\em Conclusion.} The correctness of this approach follows from the explanations given above and the discussions and proofs regarding {\Problem{LCS-$\Sigma$R}}. Its time complexity is $O(\min\{N\sigma^2, N\sigma\log m\})$.
\qed\end{proof}

\section{LCS with Global Constraints}
\label{sec:bwin}
\vspace*{-0.1cm}

In this section, we present our solution to \Problem{LCS-BR}. First, we note the na\"ive solution: we consider every pair $(v[i+1:i+B],w[j+1:j+B])$ of factors of length $B$ of the two input words, respectively, and find their longest common subsequence, using the folklore dynamic programming algorithm for \Problem{LCS}. As each word of length $n$ has $n - B + 1$ factors of length $B$, this approach requires solving LCS for $O((m - B)(n-B))\subseteq O(N)$ words, with each such LCS-computation requiring $O(B^2)$ time. This yields a total time complexity of $O(N B^2)$.

Here, we improve this by providing an $O(N B^{o(1)})$ time algorithm via the \emph{alignment oracles} provided by Charalampopoulos et al. \cite{Charalampopoulos21}. Each such oracle is built for a pair of words $v$ and $w$, with $|v|=m$, $|w|=n$ and $N=mn$, and is able to return the answer to queries asking for the length of the LCS between two factors $v[i:i']$ and $w[j:j']$. One of the results of \cite{Charalampopoulos21} is the following theorem.

\begin{theorem}[\cite{Charalampopoulos21}]
    \label{thm:oracle}
    We can construct in $N^{1+o(1)}$ time an alignment oracle for the words $v$ and $w$, with $\log^{2+o(1)} N$ query time.
\end{theorem}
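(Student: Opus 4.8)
The plan is to build the oracle on top of Tiskin's \emph{semi-local LCS} machinery, in which the LCS between a fixed string and all factors of another string is encoded by a single permutation (equivalently, a \emph{unit-Monge} matrix). Concretely, for two strings $x$ and $y$ there is a permutation matrix $P_{x,y}$ of size $O(|x|+|y|)$ such that, for every factor $y[j:j']$, the value $\LCS(x, y[j:j'])$ equals an affine term minus the number of ones of $P_{x,y}$ lying in a corner (dominance) region determined by $j$ and $j'$. The first step is therefore to reduce every query to a constant number of \emph{dominance-counting} queries on precomputed permutation matrices, which can be answered in $O(\log^{1+o(1)} N)$ time using standard orthogonal range-counting structures (e.g.\ wavelet trees).

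\textbf{Composition and preprocessing.} The key algebraic fact is that these unit-Monge matrices form a monoid under $(\min,+)$-multiplication (the ``seaweed''/sticky-braid product): if $y = y_1 y_2$ then $P_{x,y}$ is the seaweed product of $P_{x,y_1}$ and $P_{x,y_2}$, and symmetrically when $x$ is split, which yields the full all-substrings-versus-all-substrings relation; moreover the seaweed product of two size-$N'$ permutations is computable in $O(N'\log N')$ time by Tiskin's divide-and-conquer. Building on this, I would impose a balanced binary decomposition on both $v$ and $w$, so that any factor $v[i:i']$ (resp.\ $w[j:j']$) is the concatenation of $O(\log N)$ canonical pieces. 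Precomputing the permutation matrices for canonical piece pairs and merging children into parents by seaweed multiplication costs $O(N'\log N')$ per merge; since each such product is again a permutation of comparable size and there are only $O(\log N)$ levels, the total storage and the total merging cost are both $N^{1+o(1)}$, matching the claimed preprocessing bound.

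\textbf{Query and the main obstacle.} Given a query $(i,i',j,j')$, decompose $v[i:i']$ and $w[j:j']$ into their $O(\log N)$ canonical pieces; the required value $\LCS(v[i:i'],w[j:j'])$ is then determined by the seaweed composition of the $O(\log^2 N)$ corresponding precomputed matrices, read off by dominance counting. The crux -- and the step I expect to be hardest -- is to attain \emph{polylogarithmic} query time rather than merely near-linear: naively the query would require recomputing $O(\log^2 N)$ seaweed products, each near-linear, which is far too slow. The way around this is to precompute, in the spirit of a two-dimensional segment tree of seaweed matrices, the products along canonical runs so that a query touches only $O(\log^2 N)$ already-composed permutations and then performs $O(\log^2 N)$ dominance counts of cost $O(\log^{1+o(1)}N)$ each, for the claimed $\log^{2+o(1)}N$ total. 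Verifying that the seaweeds compose consistently across the decomposition boundaries (so that gluing canonical pieces reproduces the true all-substrings LCS) and bounding the total size of the precomputed products to keep preprocessing near-linear are the delicate points; the unit-Monge invariance under the seaweed product is exactly what makes both controllable.
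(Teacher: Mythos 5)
First, note that this theorem is not proved in the paper at all: it is imported verbatim from \cite{Charalampopoulos21} and used as a black box, so there is no in-paper proof to compare against. That said, your sketch does correctly identify the machinery that the cited work actually rests on: Tiskin's semi-local $\LCS$, the encoding of all-substring $\LCS$ values by unit-Monge permutation matrices with dominance counting, and composition of these matrices under the $(\min,+)$ (seaweed) product along a balanced decomposition of the strings. Up to that point your outline is faithful to the real argument.

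The genuine gap is exactly at the step you yourself flag as the crux, and your proposed resolution does not close it. You suggest a ``two-dimensional segment tree of seaweed matrices'' so that a query touches $O(\log^2 N)$ precomputed permutations and then performs $O(\log^2 N)$ independent dominance counts. But the $\LCS$ of the query pair is determined by a single entry of the \emph{iterated $(\min,+)$ product} of those $O(\log^2 N)$ matrices, and such an entry does not decompose into a sum (or min) of independent dominance counts in the factors --- the whole difficulty of the problem is that the product genuinely mixes the factors. If instead you precompute the products over all canonical runs in both coordinates to avoid multiplying at query time, the number of canonical pairs times the (near-linear) size of each product pushes preprocessing well beyond $N^{1+o(1)}$; and if you precompute only a segment-tree's worth, the query must still perform $O(\log N)$ seaweed multiplications, each near-linear in the matrix size, which destroys the $\log^{2+o(1)} N$ query bound. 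The actual contribution of \cite{Charalampopoulos21} is precisely a way around this (evaluating a single entry of such a composed product without materializing it); your sketch names the obstacle but asserts rather than supplies the idea that overcomes it. A secondary, smaller issue: the semi-local permutation $P_{x,y}$ handles all substrings of $y$ against the \emph{whole} of $x$; extending to all-substrings-versus-all-substrings when $x$ is also split is not ``symmetric'' in the way you claim and needs its own argument.
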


Theorem \ref{thm:oracle} can be used directly to build an alignment oracle $\mathcal{A}$ for the input words $w$ and $v$ in $O(N^{1 + o(1)})$ time.
Using this oracle, we can solve $\Problem{LCS-BR}$ by making $O(N)$ queries to $\mathcal{A}$, for every pair of indices $i\leq m,j \leq n$, each requiring $O(\log^{2 + o(1)} N)$ time.
The total time complexity of this direct approach is therefore $O(N^{1 + o(1)})$. We improve this approach by creating a set of smaller oracles, allowing us to avoid the extra work required to answer $\Problem{LCS}$ queries beyond the bounded range. For simplicity, assume w.l.o.g. that $m$ and $n$ are multiples of $2B$.
Let, for all $i$, $w_i = w[i B + 1: (i + 2)B]$, or $w_i=w[i B + 1: n]$ if $(i + 2) B > n$ and $v_i = v[i B + 1: (i + 2)B]$, or $v_i=v[i B + 1: n]$ if $(i + 2) B > m$.
Observe that every factor of length $B$ of $w$ appears in at least one subword $w_i$ and every factor of length $B$ of $v$ appears in at least one subword $v_i$ .
Therefore, solving $\Problem{LCS-BR}$ with input $v_i, w_j$ for every $i \in [0, m/B], j \in [0, n/B]$ would also give us the solution to $\Problem{LCS-BR}$ for input words $v,w$.

To find the solution to $\Problem{LCS-BR}$ with input $v_i, w_j$, we use Theorem \ref{thm:oracle} to construct an oracle $\mathcal{A}_{i,j}$ for $v_i$ and $w_j$.
As $|v_i| = |w_j| = 2 B$, the oracle $\mathcal{A}_{i,j}$ can be constructed in $O(B^{2 + o(1)})$ time.
The solution of $\Problem{LCS-BR}$ for the input words $v_i, w_j$ can be then determined by making $O(B^2)$ queries to $\mathcal{A}_{i,j}$, each requiring $O(\log^{2 + o(1)} B)$ time.
So, the total time complexity of solving $\Problem{LCS-BR}$ for input words $v_i, w_j$ is $O(B^{2 + o(1)})$. As there are $O(N / B^2)$ pairs of factors $v_i,w_j$, the time complexity of solving $\Problem{LCS-BR}$ for words $v,w$ is 
$O\left(\frac{N}{B^2} B^{2 + o(1)}\right) = O(NB^{o(1)}).$

\begin{theorem}
	\label{thm:lcs_b} 
	\Problem{LCS-BR} can be solved in $O(NB^{o(1)}) $time. 
\end{theorem}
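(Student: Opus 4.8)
The plan is to exploit the alignment oracle of Theorem~\ref{thm:oracle} in a windowed fashion, so that we never pay the full $N^{1+o(1)}$ cost of a single global oracle, but instead amortize the oracle construction over many small windows. The key observation is that the bounded-range constraint means any feasible common $B$-subsequence lives inside a factor of length $B$ in each word; hence it suffices to restrict attention to pairs of length-$B$ windows. First I would cover $w$ (respectively $v$) by the overlapping blocks $w_i = w[iB+1:(i+2)B]$ (respectively $v_i$) of length $2B$, for $i$ ranging over $[0,n/B]$ (respectively $[0,m/B]$), truncating at the right end; the crucial combinatorial fact to verify is that \emph{every} factor of length $B$ of $w$ is entirely contained in at least one block $w_i$, which holds precisely because consecutive blocks overlap by $B$. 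Consequently, for any valid placement of a common $B$-subsequence there is a pair $(i,j)$ such that both the relevant length-$B$ factor of $v$ and the one of $w$ sit inside $v_i$ and $w_j$, so solving \Problem{LCS-BR} on all pairs $(v_i,w_j)$ and taking the maximum yields the answer for $(v,w)$.

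Next I would, for each pair $(i,j)$, build a \emph{local} alignment oracle $\mathcal{A}_{i,j}$ for the two length-$2B$ strings $v_i$ and $w_j$ using Theorem~\ref{thm:oracle}. Since each oracle is constructed over inputs of size $2B \times 2B$, its construction costs $O(B^{2+o(1)})$ time. To extract the \Problem{LCS-BR} answer \emph{within} this pair of blocks, I would iterate over the $O(B^2)$ candidate starting offsets for the length-$B$ range inside $v_i$ and $w_j$, and for each pair of ranges issue a single LCS query $v_i[\cdot:\cdot], w_j[\cdot:\cdot]$ to $\mathcal{A}_{i,j}$; each such query costs $O(\log^{2+o(1)}B)$ time by Theorem~\ref{thm:oracle}. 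Thus the total per-pair cost is dominated by the oracle construction, namely $O(B^{2+o(1)})$, since $O(B^2\log^{2+o(1)}B)\subseteq O(B^{2+o(1)})$.

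Finally I would sum over all pairs. There are $O(n/B)\cdot O(m/B) = O(N/B^2)$ pairs $(i,j)$, so the overall running time is
\[
O\!\left(\frac{N}{B^2}\cdot B^{2+o(1)}\right) = O(NB^{o(1)}),
\]
which is the claimed bound. The w.l.o.g.\ assumption that $m,n$ are multiples of $2B$ only affects boundary blocks, which I would handle by the stated truncation $w_i = w[iB+1:n]$ when $(i+2)B>n$, without changing the asymptotics.

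The step I expect to require the most care is the coverage argument: I must make sure that every length-$B$ window of each word is captured by some block and, more subtly, that a feasible common $B$-subsequence of $v$ and $w$ — whose length-$B$ range in $v$ and length-$B$ range in $w$ are chosen \emph{independently} — is always witnessed by a \emph{single} block pair $(v_i,w_j)$ rather than requiring one block from $v$ and an incompatible block from $w$. Since the range in $v$ and the range in $w$ are decoupled in the definition of \Problem{LCS-BR}, I would argue that the correct $i$ is determined solely by the $v$-range and the correct $j$ solely by the $w$-range, so the two choices never conflict; this decoupling is exactly what makes the block decomposition legitimate. A secondary subtlety is confirming that restricting queries to ranges of length exactly $B$ inside the length-$2B$ blocks suffices to recover the true optimum, i.e.\ that no optimal $B$-subsequence is lost by the windowing, which again follows from the overlap-by-$B$ property of the blocks.
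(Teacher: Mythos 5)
Your proposal is correct and follows essentially the same approach as the paper: the same overlapping length-$2B$ blocks $v_i,w_j$, the same per-pair local alignment oracles built via Theorem~\ref{thm:oracle} in $O(B^{2+o(1)})$ time with $O(B^2)$ queries each, and the same $O(N/B^2)$ pair count giving $O(NB^{o(1)})$ overall. Your extra care about the decoupling of the $v$-range and $w$-range is a valid elaboration of a point the paper leaves implicit.
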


 We complement our exact algorithm with an $O(N)$ time constant-factor approximation algorithm.
As before, we use the subwords $v_i = v[i B + 1 : (i + 2) B]$ and $w_j = w[j B + 1 : (j + 2) B]$.
Letting $s$ be the longest common subsequence within a bounded range of length $B$ between $v$ and $w$, note that $|s| \leq \max_{i,j \in [0, n/B]} LCS(v_i, w_j) \leq 3 |s|$.
Therefore, $\max_{i,j \in [0, n/B]} LCS(v_i, w_j) / 3$ is at most a $(1/3)$-approximation of the longest common $B$-subsequence between $v$ and $w$.

\begin{theorem}
	\label{thm:lcs_b_approx}
	Given a pair of words $v, w \in \Sigma^*$, of length at most $n$, and let $v_i = v[i B + 1: (i + 2)B]$ and, respectively, $w_j = w[j B + 1: (2 + 1)B]$.
	The length of the longest common $B$-subsequence $s$ between $v$ and $w$ has the following bound:
	$$\max_{i,j \in [n / B]} LCS(v_i,w_j)/3 \leq  |s| \leq \max_{i,j \in [n / B]} LCS(v_i,w_j)$$
    where $LCS(v_i, w_j)$ is the length of longest common subsequence between the strings $v_i$ and $w_j$.
	Further, these bounds can be found in $O(N)$ time.
\end{theorem}

\begin{proof}
	Let $x,y \in [B + 1, n]$ be the pair of indices maximising $LCS(v[x - B: x], w[y - B: y])$, and let $s$ be the longest common subsequence between $v[x - B: x]$ and $w[y - B: y]$.
	Note that $s$ must appear as a subsequence of $v_i$ and $w_j$ where $i = \lfloor \frac{x}{B} \rfloor$ and $j = \lfloor \frac{y}{B} \rfloor$.
    Therefore, as $LCS(v_i, w_j) \geq |s|$, the length of $s$ is at $\max_{i,j \in [n/B]} LCS(v_i, w_j)$.
    
    On the other direction, assume for the sake of contradiction, that there exists some $v_i, w_j$ such that $|s| < LCS(v_i, w_j) / 3$.
    Let $s'$ be the longest common subsequence between $v_i$ and $w_j$ such that $|s'| > 3 |s|$.
	Now, let $(x_1, y_1)$, $(x_2, y_2)$, $\dots$, $(x_{|s'|}, y_{|s'|})$ be a set of tuples of indices such that $v[x_{\ell}] = w[y_{\ell}] = s'[\ell]$.
	Observe that there exists a set of indices such that, for every $\ell \in [2, |s'|]$, $x - B \leq x_{\ell - 1} < x_{\ell} \leq x$ and $y - B \leq y_{\ell - 1} < y_{\ell} \leq y$.
	Therefore, if there exists some pair of indices $x_{\ell} > i  B + B$, $y_{\ell} \leq j B + B$, there can not exist any pair $x_{\ell'} \leq i B + B$, $y_{\ell'} > j B + B$.
	Conversely if there exists some pair $x_{\ell} \leq i B + B$, $y_{\ell'} > j B + B$, then there can not exist any pair $x_{\ell'} \leq i B + B$, $y_{\ell'} > j B + B$.
	Therefore, there must be a set of sequences $s'_1, s'_2, s'_3$ such that $s' = s'_1, s'_2, s'_3$, and:
	\begin{itemize}
		\item $s'_1$ is a subsequence of $v_i[1: B]$ and $w_j[1:B]$.
		\item $s'_2$ is a subsequence of either $v_{i}[1:B]$ and $w_{j}[B + 1: 2B]$ or $v_{i}[B + 1: 2B]$ and $w_j[1, B]$.
		\item $s'_3$ is a subsequence of $v_{i}[B + 1: 2B]$ and $w_{j}[B + 1: 2B]$.
	\end{itemize}
	As $|s'_1| + |s'_2| + |s'_3| = |s'|$, then the length of at least one of $LCS(v_i[1:B], w_j[1:B])$, $LCS(v_i[1: B], w_{j})[B + 1:2B]$, $LCS(v_{i}[B + 1:2B], w_j[1: B])$ or $LCS(v_{i}[B + 1: 2B], w_{j}[B + 1: 2B])$ must be at least $|s'| / 3 \leq |S|$.
	Hence,
	$$\max_{i,j \in [n / B]} LCS(v_i,w_j)/3 \leq  |s| \leq \max_{i,j \in [n / B]} LCS(v_i,w_j).$$
	The time complexity follows directly.
\qed\end{proof}

\section{Future Work}
\label{sec:conclusion}

A series of problems remain open from our work. Can our results regarding \Problem{LCS-MC} be improved, at least in its particular case \Problem{LCS-O(1)C}? If not, can one show tight complexity lower bounds for these problems? Can the dependency of $\Sigma$ from the solutions to \Problem{LCS-$\Sigma$} and its variants be removed? We were not focused on shaving polylog factors from the time complexity of our algorithms, but it would be also interesting to see if this is achievable. Nevertheless, it would be interesting to address also the problem of efficiently computing the actual longest common constrained subsequences in the case of all addressed problems. 


\bibliographystyle{splncs04}
\bibliography{bibliography}



\end{document}